\newcolumntype{C}[1]{>{\centering\let\newline\\\arraybackslash\hspace{0pt}}p{#1}}
\newcommand{\tf}{{\tt TensorFlow}}
\newcommand{\veg}{{\tt VEGAS}}
\newcommand{\foam}{{\tt Foam}}
\newcommand{\iflow}{{\tt i-flow}}
\definecolor{orange}{RGB}{255,127,0}
\newtheorem{theorem}{Theorem}
\newtheorem{corollary}{Corollary}
\begin{document}

\begin{titlepage}

\begin{flushright}
  {\small
    FERMILAB-PUB-20-010-T\\
\today
}
\end{flushright}

\vspace{0.5cm}
\begin{center}
{\Large\bf \boldmath \iflow : High-dimensional Integration and Sampling with Normalizing Flows
\unboldmath}
\end{center}

\vspace{0.5cm}
\begin{center}
{\sc Christina Gao$^{1}$, Joshua Isaacson$^{1}$, and Claudius Krause$^{1}$} 
\end{center}

\vspace*{0.4cm}

\begin{center}
  $^{1}$ Theoretical Physics Department, Fermi National Accelerator Laboratory, Batavia, IL, 60510, USA\\
\end{center}

\vspace{1.5cm}
\begin{abstract}
\vspace{0.2cm}\noindent
In many fields of science, high-dimensional integration is required. Numerical methods have been developed to evaluate these complex integrals. We introduce the code \iflow, a python package that performs high-dimensional numerical integration utilizing normalizing flows. Normalizing flows are machine-learned, bijective mappings between two distributions. \iflow\ can also be used to sample random points according to complicated distributions in high dimensions. We compare \iflow\ to other algorithms for high-dimensional numerical integration and show that \iflow\ outperforms them for high dimensional correlated integrals. The \iflow\ code is publicly available on gitlab at {\tt https://gitlab.com/i-flow/i-flow}. 

\end{abstract}
\maketitle
\vfill

\end{titlepage}

\section{Introduction}

Simulation based on first principles is an important practice, because it is the only way that a theoretical model can be checked against experiments or real-world data.
In high-energy physics (HEP) experiments, a thorough understanding of the properties of known physics forms the basis of any searches that look for new effects.
This can only be achieved by an accurate simulation, which in many cases boils down to performing an integral and sampling from it. 
Often high-dimensional phase space integrals with non-trivial correlations between dimensions are required in important theory calculations. 
Monte-Carlo (MC) methods still remain as the most important techniques for solving high-dimensional problems across many fields, including for instance: biology~\cite{Hobolth2008, MC.applications}, chemistry~\cite{1955JChPh..23..356R}, astronomy~\cite{MacGillivray1982}, medical physics~\cite{2006PMB....51R.287R}, finance~\cite{jackel2002monte} and image rendering~\cite{kalos2008monte}.
In high-energy physics, all analyses at the Large Hadron Collider (LHC) rely strongly on multipurpose Monte Carlo event generators~\cite{Webber:1986mc,Buckley:2011ms} for signal or background prediction. However, the extraordinary performance of the experiments requires an amount of simulated data that soon cannot be delivered with current algorithms and computational resources~\cite{Buckley:2019wov,ATLASCS}. 

A main endeavour in the field of MC methods is to improve the error estimate. In particular, stratified sampling --- dividing the integration domain in sub-domains, and importance sampling --- sampling from non-uniform distributions~\cite{James:1968gu} are two ways of reducing the variance.
Currently, the most widely used numerical algorithm that exploits importance sampling is the \veg\ algorithm~\cite{Lepage:1977sw,Lepage:1980dq}. But \veg\ assumes the factorizability of the integrand, which can be a bad approximation if the variables have complex correlations amongst one another.
\foam~\cite{Jadach:2002kn} is a popular alternative that tries to address this issue. It uses an adaptive strategy to attempt to model correlations, but requires exponentially large samples in high dimensions.

Lately, the burgeoning field of machine learning (ML) has brought new techniques into the game. For the following discussion, we restrict ourselves to focus on progress made in the field of high-energy physics, see~\cite{Bourilkov:2019yoi} for a recent review. However, these techniques are also widely applied in other areas of research. Concerning event generation,~\cite{Bendavid:2017zhk} used boosted decision trees and generative adversarial networks (GANs) to improve MC integration. Reference~\cite{Klimek:2018mza} proposed a novel idea that uses a dense neural network (DNN) to learn the phase space directly and shows promising results.
In principle, once the neural network(NN)-based algorithm for MC integration is trained, one can invert the network and use it for sampling.
However, the inversion of the NN requires evaluating its Jacobian, which incurs a computational cost that scales as $\mathcal{O}(D^{3})$ for $D$-dimensional integrals~\footnote{An $N$ particle final state phase space is a $D\approx 4N-3$ dimensional integral, when including recursive multichannel selection in the integral.}. 
Therefore, it is extremely inefficient to use a standard NN-based algorithm for sampling.

In addition to generating events from scratch, it is possible to generate additional events from a set of precomputed events. References~\cite{Otten:2019hhl,Hashemi:2019fkn,
  DiSipio:2019imz,Butter:2019cae,Carrazza:2019cnt,SHiP:2019gcl,Butter:2019eyo} used GANs and Variational Autoencoders (VAEs) to achieve this goal. While their work is promising, they have a few downsides. The major advantage of this approach is the drastic speed improvement over standard techniques. They report improvements in generation of a factor around 1000. However, this approach requires a significant number of events already generated which may be cost prohibitive for interesting, high-multiplicity problems. Furthermore, these approaches can only generate events similar to those already generated. Therefore, this would not improve the corners of distributions~\cite{Matchev:2020tbw} and can even result in incorrect total cross-sections. Yet another approach to speed up event generation is to use NN as interpolator and learn the Matrix Element~\cite{Bishara:2019iwh}. 

Our goal is to explore NN architectures that allow both efficient MC integration and sampling.
A ML algorithm based on \emph{normalizing flows} (NF) provides such a candidate.
The idea was first proposed by
\emph{non-linear independent components estimation} (NICE)~\cite{DBLP:journals/corr/DinhKB14,Dinh2016DensityEU}, and generalized in~\cite{rezende2015variational,DBLP:journals/corr/abs-1808-03856,2019arXiv190604032D}, for example.
They introduced \emph{coupling layers} (CL) allowing the inclusion of NNs in the construction of a bijective mapping between the target and initial distributions such that the $\mathcal{O}(D^{3})$ evaluation of the Jacobian can be reduced to an analytic expression. This expression can now be evaluated in $\mathcal{O}(D)$ time. These techniques have also been combined with Markov Chain Monte Carlo methods, showing promising results~\cite{2017arXiv170607561S,2017arXiv171109268L,2019arXiv190303704H}.

Our contribution is a complete, openly available implementation of normalizing flows into \tf~\cite{tensorflow2015-whitepaper}, to be used for any high-dimensional integration problem at hand. Our code includes the original proposal of~\cite{DBLP:journals/corr/abs-1808-03856} and the additions of~\cite{2019arXiv190604032D}. We further include various different loss functions, based on the class of $f$-divergences~\cite{2014ISPL...21...10N}. 
The paper is organized in the following way. The basic principles of MC integration and importance sampling are reviewed in Section \ref{sec:MC.int}. 
In Section \ref{sec:Imp.Samp}, we review the concept of normalizing flows and work done on CL-based flow by \cite{DBLP:journals/corr/DinhKB14,Dinh2016DensityEU,DBLP:journals/corr/abs-1808-03856,2019arXiv190604032D}. We investigate the minimum number of CLs required to capture the correlations between every other input dimension.
Section \ref{sec:setup} sets up the stage for a comparison between our code, \veg, and \foam\ on various trial functions, of which we give results in Section \ref{sec:examples}. This comparison is based on several criteria, allowing a potential user to judge whether it might be worth trying out. Section \ref{sec:conclusions} contains our conclusion and outlook.

\section{Monte Carlo Integrators}
\label{sec:MC.int}
While techniques exist for accurate one-dimensional integration, such as double exponential integration~\cite{takahasi1974double}, using them for high dimensional integrals requires repeated evaluation of one dimensional integrals. This leads to an exponential growth in computation time as a function of the number of dimensions. This is often referred to as the \emph{curse of dimensionality}.
In other words, when the dimensionality of the integration domain increases, the points become more and more sparse and no statistically significant statement can be made without increasing the number of points exponentially. This can be seen in the ratio of the volume of a $D$-dimensional hypersphere to the $D$-dimensional hypercube, which vanishes as $D$ goes to infinity. However, Monte-Carlo techniques are statistical in nature and thus always converge as $1/\sqrt{N}$ for any number of dimensions. 

Therefore, MC integration is the most important technique in solving high-dimensional integrals numerically. 
The na\"ive MC approach samples uniformly on the integration domain ($\Omega$). Given $N$ uniform samples, the integral of $f\left(x\right)$ can be approximated by,
\begin{equation}\label{eq:IS.0}
  I\approx\frac{V}{N}\sum_{i=1}^N f(x_i)\equiv V\,\langle f\rangle_{x}\, ,
\end{equation}
and the uncertainty is determined by the standard deviation of the mean,
\begin{equation}  \label{eq:IS.1}
  \sigma_I=\sqrt{\text{Var}}\approx V\,\sqrt{\frac{\langle {f}^2\rangle_{x}-\langle f\rangle_{x}^2}{N-1}}\;,
\end{equation}
where $V$ is the volume encompassed by $\Omega$ and $\langle\ \rangle_{x}$ indicates that the average is taken with respect to a uniform distribution in $x$.
While this works for simple or low-dimensional problems, it soon becomes inefficient for high-dimensional problems. 
This is what our work is concerned with. In particular, we are going to focus on improving current methods for MC integration that are based on importance sampling.

In importance sampling, instead of sampling from an uniform distribution, one samples from a distribution $g(x)$ that ideally has the same shape as the integrand $f(x)$.
Using the transformation ${\rm d}x={\rm d}G(x)/g(x)$, with $G(x)$ the cumulative distribution function of $g(x)$, 
one obtains
\begin{equation}
  \label{eq:MC.3}
  I=\int_{\Omega} \frac{f(x)}{g(x)}\, \text{d}G(x)=V\,\langle f/g\rangle_{G}\;,
  \qquad
  \sigma_I=V\,\sqrt{\frac{\langle (f/g)^2\rangle_{G}-\langle f/g\rangle_{G}^2}{N-1}}\;.
\end{equation}
In the ideal case when $g(x)\to f(x)/I$, Eq.~\eqref{eq:MC.3} would be estimated with vanishing uncertainty.
However, this requires already knowing the analytic solution to the integral!
The goal is thus to find a distribution $g(x)$ that resembles the shape of $f(x)$ most closely, while being integrable and invertible such as to allow for fast sampling. 
We review the current MC integrators that are widely used, especially in the field of high-energy physics.

\veg~\cite{Lepage:1977sw,Lepage:1980dq}
approximates all 1-dimensional projections of the integrand using a histogram and an adaptive algorithm. This algorithm adjusts the bin widths such that the area of the bins are roughly equal. To sample a random point from \veg\ can be done in two steps. First, select a bin randomly for each dimension. Second, sample a point from each bin according to a uniform distribution. However, this algorithm is limited because it assumes that the integrand factorizes, i.e.
\begin{equation}
f\left(\vec{x}\right) = f_1\left(x_1\right)\cdots f_D\left(x_D\right),
\end{equation}
where $f\colon\mathbb{R}^D\mapsto \mathbb{R}$ and $f_i\colon\mathbb{R}\mapsto \mathbb{R}$. High-dimensional integrals with non-trivial correlations between integration variables, that are often needed for LHC data analyses, cannot be integrated efficiently with the \veg\ algorithm (c.f.~\cite{Hoeche:2019rti}). The resulting uncertainty can be reduced further by applying stratified sampling, in addition to the \veg\ algorithm, after the binning~\cite{VEGAS}.

\foam~\cite{Jadach:2002kn}
uses a cellular approximation of the integrand and is therefore able to learn correlations between the variables. In the first phase of the algorithm, the so-called exploration phase, the cell grid is built by subsequent binary splits of existing cells. Since the first cell consists of the full integration domain, all regions of the integration space are explored by construction. The second phase of the algorithm uses this grid to generate points either for importance sampling or as an event generator. In this work we use the implementation of \cite{Hoeche:foam}, which implemented an additional reweighting of the cells at the end of the optimization.

However, both \foam\ and \veg\ are based on histograms, whose edge effects would be detrimental to numerical analyses that demand high precision. As we will explain below, our code \iflow\  uses a spline approximation which does not suffer from these effects. These edge effects are an important source of uncertainty for high-precision physics~\cite{Dulat:2017prg}. 

\section{Importance Sampling with Normalizing Flows}
\label{sec:Imp.Samp}
As we detailed in the previous section, importance sampling requires finding an approximation $g(x)$ that can easily be integrated and subsequently inverted, so that we can use it for sampling. Mathematically, this corresponds to a coordinate transformation with an inverse Jacobian determinant that is given by $g(x)$.
General ML algorithms incorporate NNs in learning the transformation, which inevitably involve evaluating the Jacobian of the NNs. This results in inefficient sampling. Coupling Layer-based Normailzing Flow algorithms precisely circumvent this problem.
To begin, let us review the concept of a normalizing flow (NF).

Let $c_k$, with $k=1,...,K$, be a series of bijective mappings on the random variable $\vec{x}$:
\begin{equation}
  \vec{x}_K=\;c_K(c_{K-1}(\cdots c_2(c_1(\vec{x})))).
\end{equation}
Based on the chain rule, the output $\vec{x}_K$'s probability distribution, $g_K$, can be inferred given the base probability distribution $g_0$ from which $\vec{x}$ is drawn: 
\begin{equation}\label{eq:nf}
  {g}_K(\vec{x}_K)=\;{g}_0(\vec{x}_0)\prod_{k=1}^{K}\,\left|
  \frac{\partial c_k(\vec{x}_{k-1})}{\partial \vec{x}_{k-1}}\right|^{-1}\;,
  \quad\text{where}\quad
  \begin{cases}
  \vec{x}_0=\vec{x} \\
  \vec{x}_k=c_k(\vec{x}_{k-1})
  \end{cases}\;.
\end{equation}
One sees that the target and base distributions are related by the inverse Jacobian determinant of the transformation.
For practical uses, the Jacobian determinant must be easy to compute, restricting the allowed functional forms of $c_k$. 
However, with the help of \emph{coupling layers},
first proposed by~\cite{DBLP:journals/corr/DinhKB14,Dinh2016DensityEU}, then generalized by~\cite{DBLP:journals/corr/abs-1808-03856,2019arXiv190604032D}, one can incorporate NNs into the construction of $c_k$, thus greatly enhancing the level of complexity and expressiveness of NF without introducing any intractable Jacobian computations.

\begin{figure}[t]
  \centerline{\includegraphics[width=0.75\textwidth]{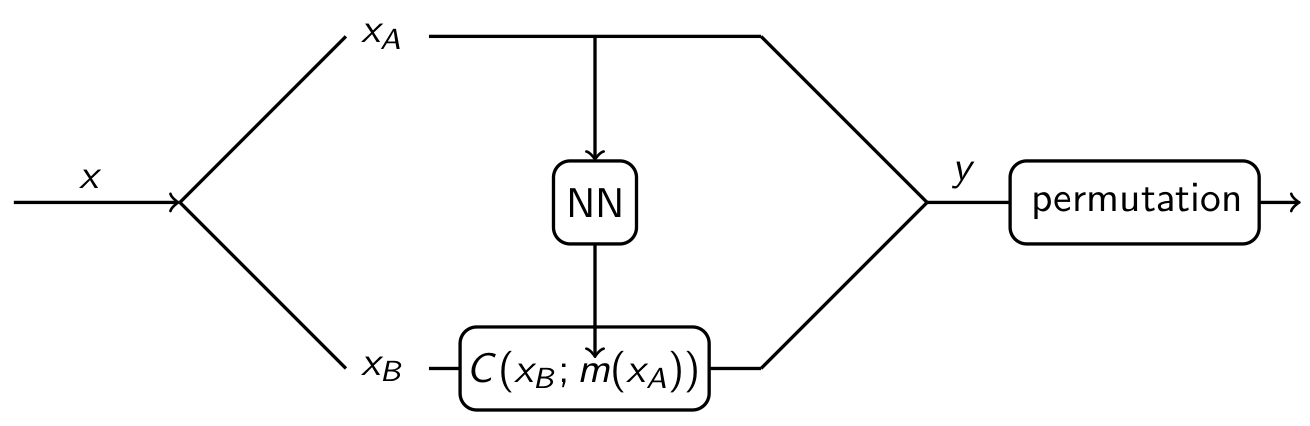}}
  \caption{Structure of a Coupling Layer.
    $m$ is the output of a neural network and defines the
    Coupling Transform, $C$, that will be applied to $x_B$. See Eqs.~\eqref{eq:bijection},~\eqref{eq:bijection2},~\eqref{eq:jacobian_cl} for the mathematical description of a Coupling Layer.}
  \label{fig:CL}
\end{figure}
Figure~\ref{fig:CL}
shows the basic structure of a coupling layer, which is a special design of the bijective mapping $c$. For each map, the input
variable $\vec{x}=\{x_1,..,x_D\}$ is partitioned into two subsets,
$\vec{x}_A$ and $\vec{x}_B$ which can be determined arbitrarily so long as neither is the empty set. This arbitrary partitioning will be referred to as a \emph{masking}. Without loss of generality, one simple partitioning is given by $\vec{x}_A=\{x_1,..,x_d\}$ and $\vec{x}_B=\{x_{d+1},..,x_D\}$. Different maskings can be achieved via permutations of the simple example above.
Under the bijective map, $C$, the resulting variable transforms as
\begin{equation}\label{eq:bijection}
  \begin{split}
    x_A' &= x_A, \quad\quad\quad \quad\quad\quad A\in[1,d]\;,\\
    x_B' &= C(x_B;m(\vec{x}_A)),\quad B\in[d+1,D]\;.
  \end{split}
\end{equation}
The NN takes $x_A$ as inputs and outputs $m(\vec{x}_A)$ that represents the parameters of the invertible
``Coupling Transform'', $C$, that will be applied to $x_B$. We detail various choices for $C$, like piecewise linear, piecewise quadratic, or piecewise rational quadratic spline functions in Appendix~\ref{sec:app:CLdetails}.
The inverse map is given by
\begin{equation}\label{eq:bijection2}
  \begin{split}
    x_A &= x_A'\;,\\
    x_B &= C^{-1}(x_B';m(\vec{x}_A'))=C^{-1}(x_B';m(\vec{x}_A))\;,
  \end{split}
\end{equation}
which leads to the simple Jacobian
\begin{equation}\label{eq:jacobian_cl}
  \left|\frac{\partial c(\vec{x})}{\partial \vec{x}}\right|^{-1}=
  \left|\left(\begin{array}{cc}
    \vec{1} & 0  \\
    \frac{\partial C}{\partial m} \frac{\partial m}{\partial \vec{x}_A}
    & \frac{\partial C}{\partial \vec{x}_B}
  \end{array}\right)\right|^{-1}
  =\left|\frac{\partial C(\vec{x}_B;m(\vec{x}_A))}{\partial \vec{x}_B}\right|^{-1}\;.
\end{equation}
Note that Eq.~\eqref{eq:jacobian_cl} does not require the computation
of the gradient of $m(\vec{x}_A)$, which would scale as $\mathcal{O}(D^3)$
with $D$ the number of dimensions. In addition, taking $\partial C/\partial\vec{x}_B$ to be diagonal further reduces the computation complexity of the determinant to be linear with respect to the dimensionality of the problem. Linear scaling makes this approach tractable even for high dimensional problems. In summary, the NN learns the parameters of a transformation and not the transformation itself, thus the Jacobian can be calculated analytically.

To construct a complete Normalizing Flow, one simply compounds a series of Coupling Layers with the freedom of choosing any of the outputs of the previous layer to be transformed in the subsequent layer. 
We show in Sec~\ref{sec:clnumber} that $2\lceil\log_2\mathrm{D}\rceil$
number of Coupling Layers are required in order to express all non-separable structures of the integrand.

\subsection{Number of Coupling Layers}\label{sec:clnumber}

The minimum number of coupling layers required to capture all possible correlations between every dimension of the integration variable, $n_\text{min}$, depends on the dimensionality of the integral, $D$~\cite{DBLP:journals/corr/abs-1808-03856}. 
In the
cases of $D=2$ and $D=3$, each dimension is transformed once based on the other dimension(s) and thus $n_\text{min}=2$ and $n_\text{min}=3$, respectively. This way of counting  $n_\text{min}$ could be generalized to higher $D$. In fact, this is what \emph{autoregressive flows} are based on~\cite{2017arXiv170507057P}. Here we show that the number of coupling layers required to capture all the correlations is $2\lceil\log_2 D\rceil$ for $D>5$, and $D$ layers for $D\leq5$. This can be considered the minimum number of layers required in order to capture all correlations, adding an additional layer will not add any new information, and similar effects should be achieved with increasing the depth of the network associated with each layer. On the other hand, this can be considered the maximum number of layers needed to capture all the correlations. If a function has fewer correlations, then all the correlations can be captured with less than $2\lceil\log_2 D\rceil$.

\begin{theorem}
Given a set of correlated random variables $x$, if a transformation exists that takes the variables $x$ to $z$, such that the correlation between the variables $z$ is zero, then a composition of normalizing flows can create such a transformation. Given a set of infinitely wide NNs that are universal function approximators, and requiring that all variables are transformed equal number of times, it is possible to represent all the correlations between variables in a normalizing flow using $2\lceil\log_2 D\rceil$ layers for $D > 5$. When $D \leq 5$ it is possible to represent all correlations with $D$ layers.
\end{theorem}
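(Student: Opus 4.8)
The plan is to recast the informal notion of ``capturing all correlations'' as a concrete reachability condition on a dependency graph and then to solve a covering problem on that graph; the same construction will settle the existence claim in the first sentence of the theorem. By Eqs.~\eqref{eq:bijection}--\eqref{eq:jacobian_cl} each coupling layer is a diffeomorphism whose free parameters are the outputs $m(\vec{x}_A)$ of a neural network. I will build an explicit balanced sequence of such layers whose composition has a \emph{full} dependency structure --- every output coordinate able to depend on every input coordinate --- and the universal-approximation hypothesis then lets the per-layer transforms be tuned so that, whenever a smooth decorrelating bijection $T\colon x\mapsto z$ exists, the composite flow realises it. The remaining, quantitative content is how few layers a balanced construction needs.

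The central bookkeeping device I would introduce is, for each coordinate, the set $S_i\subseteq\{1,\dots,D\}$ of original input coordinates on which its current value can depend. Initially $S_i=\{i\}$. A single coupling layer with masking $(\vec{x}_A,\vec{x}_B)$ leaves $S_i$ unchanged for $i\in A$ and replaces $S_j$ by $S_j\cup\bigcup_{i\in A}S_i$ for $j\in B$, since each transformed coordinate is fed the network output $m(\vec{x}_A)$. ``Capturing all correlations'' then means exactly that after the final layer $S_i=\{1,\dots,D\}$ for every $i$, subject to the constraint that every coordinate appears in $\vec{x}_B$ the same number of times. Ignoring the transitive growth of the $S_i$, a sufficient condition is that for every ordered pair $(i,j)$ some layer places $i\in A$ and $j\in B$; this is precisely a covering of the complete directed graph on $D$ vertices by the bicliques $A_\ell\times B_\ell$.

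I would then exhibit the two constructions that furnish the stated counts. The ``one coordinate at a time'' scheme uses $D$ layers, the $\ell$-th taking $\vec{x}_B=\{x_\ell\}$ and $\vec{x}_A$ the remaining coordinates; each coordinate is transformed exactly once and every $S_i$ becomes full, giving the bound $D$. The dyadic scheme labels the coordinates by $\lceil\log_2 D\rceil$-bit strings and, for each bit position, uses two layers that swap the roles of the two halves $\{{\rm bit}=0\}$ and $\{{\rm bit}=1\}$; since any two distinct coordinates differ in some bit, both ordered pairs are covered, each coordinate is transformed exactly $\lceil\log_2 D\rceil$ times (satisfying the balance constraint), and the total is $2\lceil\log_2 D\rceil$. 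Comparing $D$ with $2\lceil\log_2 D\rceil$ yields the crossover: the two agree at $D=2,4,6$, the dyadic count is strictly smaller for $D\ge 7$, and $D$ is strictly smaller at $D=3,5$; hence the minimum is $D$ for $D\le 5$ and $2\lceil\log_2 D\rceil$ for $D>5$, the claimed dichotomy.

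The step I expect to be the main obstacle is making the equivalence between ``capturing correlations'' and the combinatorial covering condition airtight, in both directions. For sufficiency one must check that the transitive enlargement of the sets $S_i$ through intermediate, already-transformed coordinates does not spoil the ability to realise an \emph{arbitrary} target dependence --- here the bijectivity of each layer and the universal-approximation hypothesis are essential, since they guarantee no information is lost between layers. For the matching lower bound, the clean input is that a family of bipartitions separating every pair of coordinates must have size at least $\lceil\log_2 D\rceil$ (each coordinate needs a distinct side-signature), and the ``equal number of times'' requirement supplies the factor of two; the genuinely delicate point is to argue that transitive information flow cannot be exploited to beat this bound, so that $2\lceil\log_2 D\rceil$ is not merely sufficient but necessary.
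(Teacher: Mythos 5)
Your proposal is correct in substance and lands on exactly the same combinatorial construction as the paper: the dyadic masking schedule obtained by writing the dimension indices in binary and splitting on each bit, two layers per bit to satisfy the balance requirement (hence $2\lceil\log_2 D\rceil$), the one-coordinate-at-a-time scheme giving $D$ layers, and the crossover comparison yielding the $D\le 5$ versus $D>5$ dichotomy (your explicit check of the tie at $D=2,4,6$ is actually more careful than the paper's). Where you differ is in the formalization of ``capturing all correlations'': the paper works directly with the $D$-fold product moment $\langle(x_1-\mu_1)\cdots(x_D-\mu_D)\rangle$ and shows that after each pair of layers the joint density factorizes over the two mask groups (citing the universality of coupling-based flows for the factorization step), so the moment splits into a product of group-wise moments; you instead track dependency sets $S_i$ and reduce sufficiency to a biclique covering of the complete digraph, invoking universal approximation only at the end. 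Your bookkeeping makes the layer count cleaner, while the paper's makes the probabilistic claim (zero correlation) more direct. Two reassurances on the points you flag as delicate: first, the theorem only asserts that $2\lceil\log_2 D\rceil$ layers suffice (``it is possible to represent''), so the matching lower bound you worry about is not actually required, and indeed the paper does not prove one --- it merely asserts the step count ``can easily be seen'' and notes that minimality is conditional (fewer layers suffice if the integrand has fewer correlations); second, the transitive-growth issue you raise is present in the paper's argument too, where it is handled by simply imposing ``the constraint that such a transformation does not introduce new correlations between the variables that have already been decorrelated,'' so your treatment is no less rigorous than the original on that point.
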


\begin{proof}
Given the random variables $x_1, \ldots x_D$, with means $\mu_1, \ldots, \mu_D$ and joint probability distribution $f\left(x_1, \ldots, x_D\right)$, the correlation between all the variables is given by: 
\begin{equation}
    \langle \left(x_1 - \mu_1\right) \ldots \left(x_D - \mu_D\right)\rangle = \int_0^{1} dx_1 \ldots dx_D \left(x_1 - \mu_1) \ldots (x_D - \mu_D\right) f\left(x_1,\ldots,x_D\right).
    \label{eq:theorem1}
\end{equation}
Using two layers of a normalizing flow network, which can be seen as a universal function approximator,
defines a transformation $T: x \mapsto y$, with the bounds of integration being mapped such that 
$y_i\left(T\left(x=0\right)\right)=0$ and 
$y_i\left(T\left(x=1\right)\right)=1\ \forall i \in [1,D]$, with the sets $\{y_a\} = \{y_i |\, i \equiv 1 \pmod{2},\, i \in [1, D]\}$ and $\{y_b\} = \{y_i |\, i \equiv 0 \pmod{2},\, i \in [1, D]\}$, such that $f\left(x_1, 
\ldots x_D\right) \mapsto g\left(\{y_{a}\}\right)h\left(\{y_{b}\}\right)$, and with 
the Jacobian $J(y,x)$. The transformation also
maps the means: $\mu \mapsto \mu^{y}$. This decomposition is possible following from the arguments of~\cite[Sec. 2.2]{papamakarios2019normalizing}. Applying the 
transformation to Eq.~\ref{eq:theorem1} gives:
\begin{equation}
    \langle \left(x_1 - \mu_1\right) \ldots \left(x_D - \mu_D\right)\rangle 
    = \int_0^{1} dy_1 \ldots dy_D J(y, x)\left(y_1 - \mu^{y}_1) \ldots (y_D - \mu^{y}_D\right) 
    g\left(\{y_{a}\}\right)h\left(\{y_{b}\}\right).
    \label{eq:theorem2}
\end{equation}
If we now consider the correlation between the variables $y$, we obtain:
\begin{align}
    \langle \left(y_1 - \mu^{y}_1\right) \ldots \left(y_D - \mu^{y}_D\right)\rangle =&
    \int_0^{1} dy_1 \ldots dy_D \left(y_1 - \mu^{y}_1\left) \ldots \right(y_D - \mu^{y}_D\right) 
    g\left(\{y_{a}\}\right)h\left(\{y_{b}\}\right) \nonumber\\
    =& \int_0^{1} \prod_{\{y_a\}} \left(dy_{a} \left(y_a - \mu^{y}_a\right)\right) 
    g\left(\{y_a\}\right)
    \times \int_0^{1} \prod_{\{y_b\}} \left(dy_{b} \left(y_b - \mu^{y}_b\right)\right) 
    h\left(\{y_b\}\right) \nonumber\\
    =& \bigg\langle \prod_{\{y_{a}\}} \bigg(y_a - \mu^{y}_a\bigg)\bigg\rangle
    \bigg\langle\prod_{\{y_{b}\}} \bigg(y_b - \mu^{y}_b\bigg)\bigg\rangle.
    \label{eq:theorem3}
\end{align}
The result of the transformation shows that the variables $\{y_a\}$ are now not correlated 
with $\{y_b\}$. We can construct a subsequent transformation $T': y \mapsto z$, with 
$z_i\left(T'\left(y=0\right)\right)=0$ and 
$z_i\left(T'\left(y=1\right)\right)=1\ \forall i \in [1,D]$, and the sets $\{z_a\} = \{z_i|\, i\equiv 1 \pmod{4},\, i\in [1,D]\}$, $\{z_b\} = \{z_i|\, i\equiv 2 \pmod{4},\, i\in [1,D]\}$, $\{z_c\} = \{z_i|\, i\equiv 3 \pmod{4},\, i\in [1,D]\}$, and $\{z_d\} = \{z_i|\, i\equiv 0 \pmod{4},\, i\in [1,D]\}$, such that 
$g\left(\{y_a\}\right)h\left(\{y_{b}\}\right) 
\mapsto g'\left(\{z_a,z_b\}\right)h'\left(\{z_c,z_d\}\right)$ with the constraint 
that such a transformation does not introduce new correlations between the variables that have 
already been decorrelated. In other words, the composition of $T$ and $T'$ can be defined as a
transformation $T'': x \mapsto z$, with $z_i(T''(x)=0)=0$ and $z_i(T''(x)=1)=1\ \forall i \in [1,D]$,
such that:
\begin{equation*}
    f\left(x_1,\ldots,x_N\right) \mapsto
    g_1\left(\{z_a\}\right)g_2\left(\{z_b\}\right)
    g_3\left(\{z_c\}\right)
    g_4\left(\{z_{d}\}\right),
\end{equation*}
and the means are mapped from $\mu$ to $\mu^{z}$.
Thus, the correlation between the variables $z$ is given by:
\begin{align}
    \langle \left(z_1 - \mu^{z}_1\right) \ldots \left(z_D - \mu^{z}_D\right)\rangle 
    =& \int_0^{1} dz_1 \ldots dz_D \left(z_1 - \mu^{z}_1) \ldots (z_D - \mu^{z}_D\right) 
    g_1\left(\{z_a\}\right)g_2\left(\{z_b\}\right)
    g_3\left(\{z_c\}\right)
    g_4\left(\{z_d\}\right) \nonumber\\
    =& \left\langle \prod_{\{z_a\}} \left(z_a-\mu^z_a\right) \right\rangle 
    \left\langle \prod_{\{z_b\}} \left(z_b-\mu^z_b\right) \right\rangle
    \left\langle \prod_{\{z_c\}} \left(z_c-\mu^z_c\right) \right\rangle
    \left\langle \prod_{\{z_d\}} \left(z_d-\mu^z_d\right) \right\rangle.
    \label{eq:theorem3}
\end{align}

The above transformations can be iterated until all the variables are decorrelated. A method of 
determining the mapping for each step can be obtained by the following procedure:
\begin{enumerate}
    \item Reindex the dimension numbers from $[1,D]$ to $[0,D-1]$
    \item Convert all dimensions to their binary representation, using the minimum number of bits 
    required to represent the number $D-1$
    \item Consider the least significant bit for each dimension, and define the transformation as $T: x 
    \mapsto y$, with $f(\{x_0\},\{x_1\}) \mapsto g(\{x_0\})h(\{x_1\})$, where $\{x_0\} (\{x_1\})$ is 
    the set of variables with a 0 (1) for the least significant bit 
    \item Repeat the third step taking the next least significant bit, until the most significant bit is 
    reached
\end{enumerate}

See Table~\ref{tab:masking} for an example of the steps above. In that example, transformation 1 would groups the first 8 dimensions in $g(\{x_0\})$ and the last 4 in $h(\{x_1\})$ etc.

\begin{table}[th]
    \centering
    \begin{tabular}{C{30mm}|C{5mm}|C{5mm}|C{5mm}|C{5mm}|C{5mm}|C{5mm}|C{5mm}|C{5mm}|C{5mm}|C{5mm}|C{5mm}|C{5mm}|C{5mm}|}
         Dimension&0&1&2&3&4&5&6&7&8&9&10&11  \\
         \hline
         Transformation 1& 0 & 1 & 0 & 1 & 0 & 1 & 0 & 1 & 0 & 1 & 0 & 1  \\
         Transformation 2& 0 & 0 & 1 & 1 & 0 & 0 & 1 & 1 & 0 & 0 & 1 & 1  \\
         Transformation 3& 0 & 0 & 0 & 0 & 1 & 1 & 1 & 1 & 0 & 0 & 0 & 0   \\
         Transformation 4& 0 & 0 & 0 & 0 & 0 & 0 & 0 & 0 & 1 & 1 & 1 & 1  \\
    \end{tabular}
    \caption{Finding the unique masking to capture all correlations in a $D=12$ space, using the procedure detailed above.}
    \label{tab:masking}
\end{table}

The number of steps for this procedure can easily be seen to be $\lceil \log_2\left(D\right) \rceil$. 
However, since we need two layers per transformation to ensure that all variables are equally 
transformed by the network leads to a requirement of $2\lceil \log_2\left(D\right) \rceil$.

In the situation of $D \leq 5$, the $i^{\text{th}}$ coupling layer can be defined to take the 
$i^{\text{th}}$ variable and transform the variable, such that it is not correlated with any other 
variable. This leads to a requirement of $D$ layers.
\end{proof}

The requirement on the above theorem is that we require that a transformation exists in order to perform the above mapping. However, even if a transformation does not exist for the integrand itself, with the use of importance sampling, it is only necessary to find a function $g$ which is as close to the integrand as possible. In \iflow\, splines are used to create the function $g$, and according to the Stone-Weierstrass Theorem~\cite{PINKUS20001, 10.2307/1989788, 10.2307/3029337}, it is possible to represent $g$ such that it is $\varepsilon$ close to $f$. A corollary of the Stone-Weierstrass Theorem for $\mathbb{R}^n$ can be expressed as:
\begin{corollary}
Given a function $f: \mathbb{R}^n \mapsto \mathbb{R}$, $\varepsilon > 0$, and $C(\mathbb{R}^n, \mathbb{R})$: the space of all real-valued continuous functions in $\mathbb{R}^n$, there exists a polynomial spline $g \in C(\mathbb{R}^n, \mathbb{R})$ such that:
\begin{equation}
    |f(x) - g(x)| < \varepsilon,
\end{equation}
for all $x \in \mathbb{R}^n$.
\end{corollary}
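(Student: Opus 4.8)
The key realization that makes the statement provable exactly as worded is that $g$ is required only to be a \emph{polynomial spline} --- a continuous, piecewise-polynomial function whose knot set may be countably infinite (though locally finite) --- and \emph{not} a single global polynomial. A single polynomial plainly cannot approximate an unbounded or rapidly oscillating continuous $f$ uniformly on all of $\mathbb{R}^n$, which is precisely why the classical Stone--Weierstrass theorem is stated for compact domains. A spline, by contrast, has infinitely many degrees of freedom distributed across $\mathbb{R}^n$, and the plan is to exploit exactly this to pass from the local (compact) Stone--Weierstrass statement to a global uniform bound.

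First I would fix a locally finite rectangular grid (knot set) on $\mathbb{R}^n$ and take the associated tensor-product linear B-spline (``hat function'') partition of unity $\{\phi_j\}$. These weights are nonnegative, piecewise-multilinear, locally finite, satisfy $\sum_j \phi_j \equiv 1$, and each has compact support $S_j$ (the star of a grid node). Crucially, the grid need not be uniform: I would refine it wherever $f$ varies rapidly, so that each box $S_j$ is small enough for the local step below. Because knots may accumulate only at infinity and never at a finite point, the refined grid remains locally finite, which is what guarantees that the eventual $g$ is a genuine spline rather than a limit of splines.

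Next, on each compact box $S_j$ I would invoke the classical Stone--Weierstrass theorem: since $f|_{S_j}$ is continuous on a compact set, there is a polynomial $p_j$ with $\sup_{S_j}|f - p_j| < \varepsilon$. (Taking $p_j$ to be the constant value of $f$ at the node reduces this to ordinary multilinear interpolation, which already suffices.) I would then define
\begin{equation*}
  g = \sum_j \phi_j\, p_j .
\end{equation*}
Since this is a locally finite sum of products of piecewise-polynomial functions, $g$ is itself a continuous piecewise-polynomial function, i.e.\ a polynomial spline, and it lies in $C(\mathbb{R}^n,\mathbb{R})$. For the error, fix any $x \in \mathbb{R}^n$; every index $j$ contributing a nonzero term has $x \in \mathrm{supp}\,\phi_j = S_j$, hence $|f(x) - p_j(x)| < \varepsilon$, so
\begin{equation*}
  |f(x) - g(x)| \le \sum_j \phi_j(x)\,|f(x) - p_j(x)| < \varepsilon \sum_j \phi_j(x) = \varepsilon .
\end{equation*}
As $x$ was arbitrary, the uniform bound holds on all of $\mathbb{R}^n$, matching the corollary exactly.

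The hard part will be the global-to-local transition itself, because the uniform $\varepsilon$ bound must survive everywhere at once even though no single polynomial works and $f$ may be unbounded or fail to be uniformly continuous. Two points carry the argument and deserve the most care. First, the blending must be by a partition of unity whose members are themselves piecewise polynomials (such as tensor-product hat functions), so that $g$ is a \emph{polynomial} spline and not merely a smooth function; ordinary bump-function partitions of unity would destroy the piecewise-polynomial structure. Second, the convex-combination form of $g$ is what prevents any error amplification at patch boundaries: since $g(x)$ is a weighted average of numbers each within $\varepsilon$ of $f(x)$, the global sup-norm error is controlled by the worst \emph{local} error rather than by any mismatch between neighbouring polynomial patches. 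Verifying that an adaptively refined grid stays locally finite, and that the supports $S_j$ can be kept inside the boxes on which the local polynomials are accurate, is the technical crux on which the whole passage to non-compact $\mathbb{R}^n$ rests.
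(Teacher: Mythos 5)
Your proof is correct in substance, but it takes a genuinely different---and considerably more careful---route than the paper. The paper's own proof is a single sentence asserting that $\mathbb{R}^n$ ``is a subset of the spaces proved in the Stone--Weierstrass Theorem'' and that the corollary ``follows directly''; strictly speaking this glosses over exactly the point you identify: Stone--Weierstrass gives uniform polynomial approximation only on \emph{compact} Hausdorff spaces, and no single polynomial can uniformly approximate an unbounded or non-uniformly-continuous $f$ on all of $\mathbb{R}^n$ (the paper leans implicitly on its citation of Pinkus for such extensions). You instead exploit the spline structure directly: local Stone--Weierstrass approximants $p_j$ on compact boxes, blended through a piecewise-multilinear partition of unity $\{\phi_j\}$, with the convex-combination estimate $|f(x)-g(x)|\le\sum_j \phi_j(x)\,|f(x)-p_j(x)|<\varepsilon$ doing the local-to-global work; this actually proves the global uniform statement as worded, which the paper's bare appeal to Stone--Weierstrass does not. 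One caveat in your construction: a pure tensor-product grid whose spacing in coordinate $i$ depends only on $x_i$ cannot always be refined adequately, because the resolution needed at a point is governed by $\|x\|_\infty$ (a box far out along $x_1$ but near zero in $x_2$ inherits the coarse near-zero spacing in direction $2$); you should instead grade the construction shell by shell---blending compact annuli, on each of which $f$ is uniformly continuous---or use an octree-style locally refined rectangular partition with a compatible piecewise-polynomial partition of unity. You correctly flag this as the technical crux, and with that repair your argument is complete and strictly stronger than the paper's one-line justification.
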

\begin{proof}
The space $R^n$ is a subset of the spaces proved in the Stone-Weierstrass Theorem, and thus the proof of the corollary follows directly from the Stone-Weierstrass Theorem.
\end{proof}
\noindent Furthermore, this can be extended to a sum of piecewise polynomials, such that any continuous and bounded function $f$ can be represented by an infinite series of polynomials (see Theorem C from~\cite{PINKUS20001}). In \iflow, we will consider the case of discontinuous functions, but these can be approximated as a continuous function with a slope of $1/\varepsilon$ in the region of discontinuity. This will lead to some difference between $f$ and $g$, but since the goal is to find a function $g$ as close to $f$ as possible, then this is acceptable and should still allow for high precision importance sampling.

\subsection{Using \iflow}
\begin{figure}[ht]
  \centering
  \includegraphics[width=0.75\textwidth,trim = 100 350 150 200, clip=true]{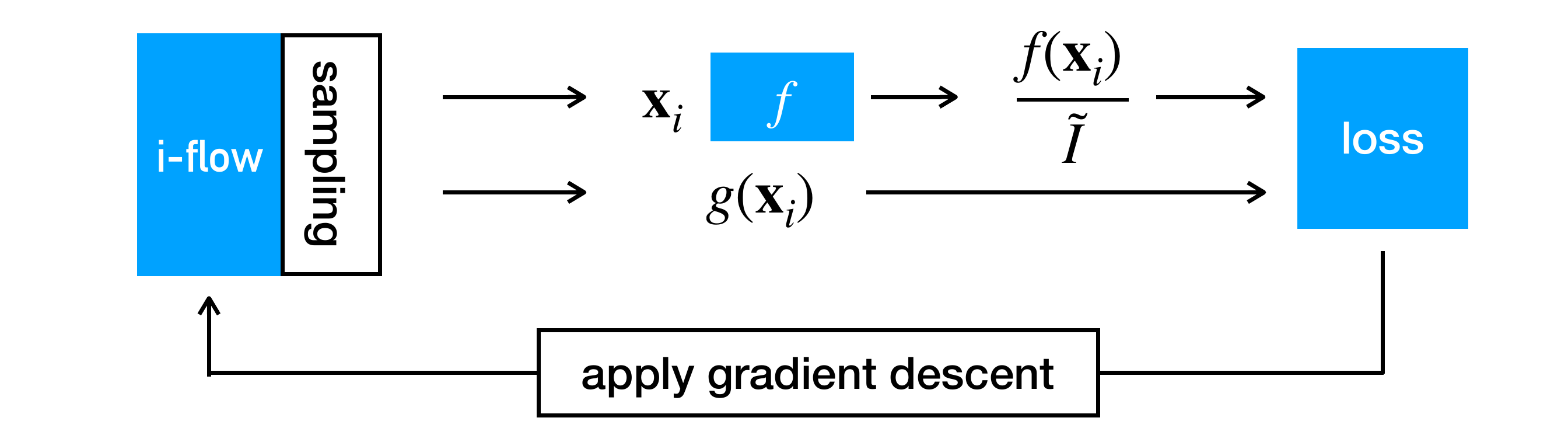}
  \caption{Illustration of one step in the training of \iflow.
    Users need to provide a normalizing flow network,
    a function $f$ to integrate, and a loss function.
    $\tilde{I}$ stands for the Monte-Carlo estimate
    of the integral using the sample of points $\vec{x}_i$, and $g(\vec{x}_i)$ is the probability of a given point occurring in the \iflow\ sampling.}
  \label{fig:NFonestep}
\end{figure}
The \iflow\ package requires three pieces of information from the user: the function to be integrated, the normalizing flow network, and the method of optimizing the network.
Figure~\ref{fig:NFonestep} shows schematically how one step in the training of \iflow\ works. The code is publicly available on gitlab at {\tt https://gitlab.com/i-flow/i-flow}. Running the script {\tt iflow\_test.py} will produce the results presented in section~\ref{sec:examples}.

\subsubsection{Integrand}

The function to be integrated has very few requirements on how it is implemented in the code. Firstly, the function must accept an array of points with shape $(n_{\text{batch}}, D)$, where $n_{\text{batch}}$ is the number of points to sample per training step. Secondly, the function must return an array with shape $(n_{\text{batch}})$ to be used to estimate the integral. Finally, the number of dimensions in the integral is required to be at least 2. However, one dimension can be treated as a dummy dimension integrated from 0 to 1, which will not change any result.

\subsubsection{Normalizing flow network}

A normalizing flow network consists of a series of coupling layers compounded together. To construct each coupling layer, one needs to specify the choice of coupling transform $C$ (cf App.~\ref{sec:app:CLdetails}), the number of coupling layers, the masking for each level, and the neural network $m(x_A)$ that constitutes the coupling transform. We provide the ability to automatically generate the masking and number of layers according to Sec.~\ref{sec:clnumber}.

The neural networks $m(x_A)$ must be provided by the user. However, we provide examples for a dense network and the U-shape network of~\cite{DBLP:journals/corr/abs-1808-03856}. This provides the user the flexibility to achieve the expressiveness required for their specific problem.

\subsubsection{Optimizing the network}

To uniquely define the optimization algorithm of the network, two pieces of information are required. 
Firstly, the loss function to be minimized is required. We supply a large set of loss functions from the set of $f$-divergences, which can be found in App.~\ref{sec:app.loss}. By default, the \iflow\ code uses the exponential loss function. 
Secondly, an optimizer needs to be supplied. In the examples we used the ADAM optimizer~\cite{kingma2014adam}. However, the code can use any optimizer implemented within \tf.
\subsubsection{Hyperparameters}
The setup we presented here has several hyperparameters that can be adjusted for better performance. However, \iflow\ has the flexibility for the user to implement additional features in each section beyond what is discussed below. This would come with additional hyperparameters as well.

The first group concerns the architecture of the NNs $m(x_A)$. Once the general type of network (dense or U-shape) is set, the number of layers and nodes per layer have to be specified. In the case of the U-shape network, the user can specify the number of nodes in the first layer and the number of ``downward" steps. 

The second group of hyperparameters concerns the optimization process. Apart from setting an optimizer ({\it e.g.} ADAM~\cite{kingma2014adam}), a learning schedule ({\it e.g.} constant or exponentially decaying), an initial learning rate, and a loss function have to be specified. Some of these options come with their own, additional set of hyperparameters. The number of points per training epoch and the number of epochs have to be set as well. 

The third group of hyperparameters concerns the setup of \iflow\ directly. As was discussed in~\cite{DBLP:journals/corr/abs-1808-03856}, there are two ways to pass $x_A$ into $m(x_A)$: either directly or with one-blob encoding. \iflow\ supports both of these options. One-blob encoding~\cite{DBLP:journals/corr/abs-1808-03856} is a generalization of one-hot encoding. The input $x_A$ is passed through a Gaussian kernel and several adjacent bins are activated. If one-blob encoding is used, the number of input bins has to be specified, the width of the Gaussian is set to the inverse of the number of bins. Further, the type of coupling function $C(x_B,m(x_A))$, the number of output bins, the number of CLs and the maskings have to be set.

\subsubsection{Putting it all together}
The networks are trained by sampling a fixed number of points using the current state of $g(x)$~\footnote{Since we initialize the last layer of each network with vanishing bias and weights, in the first sampling $g(x)$ is constant.}. We use one of the statistical divergences as a measure for how much the distribution $g(x)$ resembles the shape of the integrand $f(x)$, and an optimizer to minimize it. Because we can generate an infinite set of random numbers and evaluate the target function for each of the points, this approach corresponds to supervised learning with an infinite dataset. Drawing a new set of points at every training epoch automatically also ensures that the networks cannot overfit.


\section{Integrator Comparison}
\label{sec:setup}
To illustrate the performance of \iflow\ and compare it to \veg\ and \foam, we present a set of six test functions, each highlighting a different aspect of high-dimensional integration and sampling. These functions demonstrate how each algorithm handles the cases of a purely separable function, functions with correlations, and functions with non-factorizing hard cuts. In most cases, an analytic solution to the integral is known.

The first test function is an $n$-dimensional Gaussian, serving as a sanity check:
\begin{equation}
    \label{eq:test.1}
    f_1(\vec{x}) =  (\alpha  \sqrt\pi)^{-n} \exp{\{-\frac{\sum_i(x_i-0.5)^2}{\alpha^2}\}}\;.
\end{equation}
The result of integrating $f_1$ from zero to one is given by:

\begin{equation}
    \int_0^1 d^n \vec{x}\, f_1(\vec{x}) = \text{erf}\left(\frac{1}{2\alpha}\right)^n\;.
\end{equation}
In the following, we use $\alpha = 0.2$. 

The second test function is an $n$-dimensional Camel function, which would show how \iflow\ learns correlations that \veg\ (without stratified sampling) would not learn:   
\begin{equation}
    \label{eq:test.2}
    f_2(\vec{x}) =  \frac{1}{2}(\alpha  \sqrt\pi)^{-n} \left(
    \exp{\{-\frac{\sum_i(x_i-\tfrac{1}{3})^2}{\alpha^2}\}}+\exp{\{-\frac{\sum_i(x_i-\tfrac{2}{3})^2}{\alpha^2}\}}\right)\;.
\end{equation}
The result of integrating $f_2$ from zero to one is given by:
\begin{equation}
    \int_0^1 d^n \vec{x}\, f_2(\vec{x}) = \left(\frac{1}{2}\left(\text{erf}\left(\frac{1}{3\alpha}\right) + \text{erf}\left(\frac{2}{3\alpha}\right)\right)\right)^n\;.
\end{equation}
In the following, we use $\alpha = 0.2$. 

The third case is given by
\begin{align}
\begin{aligned}
    \label{eq:test.3}
        f_3(x_1,x_2)&=x_2^a \exp{\{-w|(x_2-p_2)^2+(x_1-p_1)^2-r^2|\}}\\
        &+(1-x_2)^a  \exp{\{-w|(x_2-1+p_2)^2+(x_1-1+p_1)^2-r^2|\}}\;.
\end{aligned}
\end{align}
This function has two circles with shifted centers, varying thickness and height. Also, the function exhibits non-factorizing behavior. The integral of $f_3$ between 0 and 1 can be computed numerically using Mathematica~\cite{Mathematica}, which is $0.0136848\pm (5\cdot 10^{-9})$,
with $p_1 = 0.4, p_2 = 0.6, r=0.25, w = 1/0.004$ and $a = 3$~\footnote{There is no known analytic solution to this given function}.

The fourth case is an annulus function with hard cuts: 
\begin{equation}
    \label{eq:test.4}
    f_{4}(x_1,x_2) = \left\{ \begin{array}{cc}
        1 &  0.2 < \sqrt{x_1^2 + x_2^2}<0.45\\
        0 & \text{else}
    \end{array}\right.\;.
\end{equation}
This function demonstrates how \iflow\ learns hard, non-factorizing cuts. 
The result of integrating $f_4$ from zero to one is given by:
$\pi\left(0.45^2-0.2^2\right) = 0.1625 \pi$.

The fifth case is motivated by high energy physics, and is a one-loop scalar box integral representative of an integral required for the calculation of $gg\to gh$ in the Standard Model. This calculation is an important contribution for the total production cross-section of the Higgs boson. As explained in App.~\ref{sec:app.box}, after Feynman parametrisation and sector decomposition~\cite{Binoth:2002xh}, the integral of interest is given by
\begin{equation}
\label{eq:test.5}
\begin{split}
    f_5=&S_{Box}(s_{12},s_{23},s_1,s_2,s_3,s_4,m_t^2,m_t^2,m_t^2,m_t^2)\\
    &+S_{Box}(s_{23},s_{12},s_2,s_3,s_4,s_1,m_t^2,m_t^2,m_t^2,m_t^2)\\
    &+S_{Box}(s_{12},s_{23},s_3,s_4,s_1,s_2,m_t^2,m_t^2,m_t^2,m_t^2)\\
    &+S_{Box}(s_{23},s_{12},s_4,s_1,s_2,s_3,m_t^2,m_t^2,m_t^2,m_t^2)\\
    S_{Box}&(s_{12},s_{23},s_1,s_2,s_3,s_4,m_1^2,m_2^2,m_3^2,m_4^2)=\int_0^1 dt_1dt_2dt_3 \frac1{\tilde{\mathcal{F}}_{Box}^2}\\
    \tilde{\mathcal{F}}_{Box}&=(-s_{12})t_2+(-s_{23})t_1t_3+(-s_1)t_1+(-s_{2})t_1t_2+(-s_{3})t_2t_3\\
    &\quad\quad+(-s_{4})t_3+(1+t_1+t_2+t_3)(t_1m_1^2+t_2m_2^2+t_3m_3^2+m_4^2)
\end{split}
\end{equation}
The result of integrating $f_5$ from zero to one can be obtained through the use of LoopTools~\cite{HAHN1999153}, which gives a numerical result of $1.9369640238\cdot 10^{-10}$ for the inputs $s_{12} = 130^2, s_{23} = -130^2, s_1 = 0, s_2 = 0, s_3 = 0, s_4 = 125^2, m_t = 175$.

As a sixth test function, we consider the polynomial 
\begin{equation}
    \label{eq:test.6}
    f_{6}(x_1,\dots,x_n) = \sum_{i=1}^n -x_i^2 + x_i.
\end{equation}
The result of integrating $f_6$ from zero to one is given by:
\begin{equation}
    \label{eq:test.7}
    \int_0^1 \, dx_1 \dots dx_n\, f_{6}(x_1,\dots,x_n) = \frac{n}{6}
\end{equation}
This function can easily be integrated in a high number of dimensions and, unlike the Gaussian or Camel functions, has support in almost all of the integration domain. It therefore does not suffer that much from the curse of dimensionality.

Further applications to event generation of high-energy particle collisions is discussed in~\cite{Gao:2020zvv} and also in~\cite{Bothmann:2020ywa}. These papers investigate using normalizing flows to improve upon phase space integration for event simulation at particle colliders. The integral dimension for processes with $n_f$ particles in the final state is $D=4n_f-3$. In~\cite{Gao:2020zvv}, we studied processes with $n_f \leq 6$. 

\section{Results}
\label{sec:examples}

\begin{table}[]
\begin{center}
\begin{tabular}{ |l c |  c| c| c |}
\hline
  &Dim& \veg & \foam & \iflow  \\ 
  \hline
 \multirow{4}{4em}{Gaussian} &2& $\mathbf{164,436}$ & $6,259,812 $& $2,310,000$  
 \\ 
&4& $\mathbf{631,874}$& $24,094,679$ &  $2,285,000$
\\ 
&8& $\mathbf{1,299,718}$ & $>50,000,000\; \dagger$& $3,095,000$
\\ 
&16& $\mathbf{2,772,216}$ & $>50,000,000\; \dagger$ & $7,230,000$ 
\\ 
\hline
 \multirow{4}{4em}{Camel} &2& $\mathbf{421,475}$ & $5,619,646$ & $2,225,000$
 \\ 
&4 & $24,139,889$ & $21,821,075$ & $\mathbf{8,220,000}$
\\ 
&8  &\;$>50,000,000\; \dagger$ \;&\; $>50,000,000$ \;&\; $\mathbf{19,460,000}$ \;
\\ 
&16& $993,294 \; \dagger$  
& $>50,000,000\; \dagger$ & $32,145,000\; \dagger$\cite{fn:camel16}
\\ 
  \hline
Entangled circles &2& $43,367,192$ & $\mathbf{17,499,823}$ & $23,105,000$
\\
  \hline
Annulus w. cuts &2& $4,981,080\; \dagger$ & $\mathbf{11,219,498}$ & $17,435,000$
\\
  \hline
Scalar-top-loop &3 & $\mathbf{152,957}$ & $5,290,142$& $685,000$
\\
    \hline
    \hline
 \multirow{3}{4em}{Polynomial} & 18 &$42,756,678$&$>50,000,000$& $\mathbf{585,000}$\\
 & 54 &$>50,000,000$&$>21,505,000 \;*$&$\mathbf{685,000}$\\
 & 96 &$>50,000,000\; \dagger$&$>10,235,000 \;*$&$\mathbf{1,145,000}$\\
  \hline
\end{tabular}
\end{center}
\caption{Number of functional calls to reach a total relative uncertainty of $10^{-4}$ (for the first 11 cases) or $10^{-5}$ (for the last 3 cases). The total relative uncertainty is defined as the inverse-variance weighted combination of the uncertainties of each optimization iteration divided by the true integral value.  The integrator with the fewest functional calls, which also is within 5 standard deviations of the true result, is highlighted in boldface. We set an upper cut-off of $5\cdot 10^7$ calls. A $\dagger$ indicates that the algorithm did not converge to the true integral value within 5 standard deviations (see tab.~\ref{tab:results2}), a $*$ indicates cases where the algorithm ran out of memory before the cut-off was reached. $\alpha=0.2$ for Gaussian and Camel functions.}
\label{tab:results}
\end{table}

\begin{table}[]
\begin{center}
\begin{tabular}{ |l c |  c r| c r| c r| c|}
\hline
  &Dim& \veg & (pull) & \foam & (pull)& \iflow & (pull) & true value\\ 
  \hline
 \multirow{4}{4em}{Gaussian} &2& $0.99925(10)$ &$0.7$&$0.99925(10)$ & $0.6$ &$\mathbf{0.99919(10)}$ &$\mathbf{0.1}$ & $0.999186$
 \\ 
&4& $0.99861(10)$& $2.4$& $\mathbf{0.99835(10)}$ &$\mathbf{-0.2}$& $0.99841(10)$ &$0.4$& $0.998373$
\\ 
&8& $0.99694(10)$ &$1.9$& $ 0.99439(37)\; \dagger$&$-6.4$& $\mathbf{0.99684(10)}$&$\mathbf{0.9}$& $0.996749$
\\ 
&16& $0.99357(10)$ &$0.6$& $0.54986(235) \; \dagger$ &$-188$& $\mathbf{0.99354(10)}$ &$\mathbf{0.4}$& $0.993509$
\\ 
\hline
 \multirow{4}{4em}{Camel} &2& $0.98175(10)$ &$0.9$& $0.98163(10)$ &$-0.3$ & $\mathbf{0.98165(10)}$&$\mathbf{-0.1}$& $0.98166$
 \\ 
&4 & $0.96345(10)$ &$-2.2$& $0.96361(10)$&$-0.5$& $\mathbf{0.96365(10)}$ & $\mathbf{-0.02}$& $0.963657$
\\ 
&8 &$0.92495(28) \; \dagger$&$-13$ &$0.92798(19) \; \dagger$ &$-3.5$& $\mathbf{0.92843(9)}$ &$\mathbf{-2.2}$ & $0.928635$
\\ 
&16& $0.43137(9)$ &$-5001$ & $0.76921(129) \; \dagger$ & $-72$& $\mathbf{0.85940(9)}$\;\cite{fn:camel16}&$\mathbf{-34}$& $0.862363$
\\ 
  \hline
Entangled circles &2& $0.0136798(14)$& $-3.6$ & $\mathbf{0.0136838(14)}$ &$\mathbf{-0.7}$ & $0.0136829(14)$ &$-1.4$& $0.0136848$
\\
  \hline
Annulus w. cuts &2& $0.509813(51)$ &$-14$& $0.510559(51)$ & $1.0$ & $\mathbf{0.510511(51)}$ &$\mathbf{0.1}$ & $0.510508$
\\
  \hline
Scalar-top-loop &3 &\; $1.93711(19)\cdot 10^{-10}$ &$0.7$&\; $\mathbf{1.93708(19)\cdot 10^{-10}}$&$\mathbf{0.6}$&\;$ 1.93677(19)\cdot 10^{-10}$ &$-1.0$&\; $1.936964\cdot 10^{-10}$\;
\\
    \hline
    \hline
\multirow{3}{4em}{Polynomial} & 18 &$2.99989(3)$ &$-3.6$&$2.99986(12) \; \dagger$&$-1.1$&$\mathbf{2.99997(3)}$&$\mathbf{-1.1}$&$3$\\
 & 54 &$8.99972(19) \; \dagger$&$-1.5$&$9.00013(32) \;*$&$0.4$&$\mathbf{9.00001(9)}$&$\mathbf{0.2}$&$9$\\
 & 96 &$0.15547(52) \; \dagger$&$-30683$&$16.0004(3)\; *$&$1.7$&$\mathbf{15.9998(2)}$&$\mathbf{-1.2}$& $16$ \\
  \hline
\end{tabular}
\end{center}
\caption{Integral estimate and uncertainty of the runs of Table~\ref{tab:results} together with their relative deviations (``pull"), defined in Eq.~\eqref{eq:rel.dev}. A $\dagger$ indicates that the algorithm reached a cut-off of $5\cdot 10^7$ function calls before the target uncertainty was reached, a $*$ indicates cases where the algorithm ran out of memory before the cut-off was reached. The result with the smallest relative deviation is boldfaced. $\alpha=0.2$ for Gaussian and Camel functions.}
\label{tab:results2}
\end{table}

\begin{table}[ht]
 \begin{center}
 \begin{tabular}{ |l c |  c| c| c |}
 \hline
   &Dim& \veg & \foam & \iflow
   \\ 
   \hline
  \multirow{4}{4em}{Gaussian} &2& $ \mathbf{7\cdot 10^{-4}}$ &$ 3\cdot 10^{-3}$ & $ 2\cdot 10^{-3}\; *$ 
  \\ 
 &4& $ \mathbf{1.5\cdot 10^{-3}}$&$ 3\cdot 10^{-3}$ &$ \mathbf{1.5\cdot 10^{-3}\; *}$  
 \\ 
 &8& $ 2.5\cdot 10^{-3}$ &$ 3\cdot 10^{-2}$ & $\mathbf{ 1.5\cdot 10^{-3}\; *}$  
 \\ 
 &16&\; $ 3.5\cdot 10^{-3}$ \;&\; $ 2\cdot 10^{-2}$ \;&\;  $ \mathbf{2.5\cdot 10^{-3}\; *}$\;  
 \\ 
 \hline
  \multirow{4}{4em}{Camel} &2& $ \mathbf{2\cdot 10^{-3}}$ & $ \mathbf{2\cdot 10^{-3}}$ & $ \mathbf{2\cdot 10^{-3} \; *}$ 
  \\ 
 &4 & $ 8\cdot 10^{-3}$ & $ 1\cdot 10^{-2}$ & $\mathbf{ 4\cdot 10^{-3}}$ 
 \\ 
 &8 &$ 4\cdot 10^{-2}$ & $ 1.6\cdot 10^{-2}$ & $\mathbf{ 5\cdot 10^{-3}}$ 
 \\ 
 &16& $\dagger$  & $1.5\cdot 10^{-1}$ &  $\mathbf{ 5\cdot 10^{-3}}$
 \\ 
   \hline
 Entangled circles &2& $ 1\cdot 10^{-2}$& $\mathbf{ 4\cdot 10^{-3}}$ & $ 5\cdot 10^{-3} \; *$
 \\
   \hline
 Annulus w. cuts &2& $\mathbf{ 3\cdot 10^{-3}}$ &$4\cdot 10^{-3}\; *$ & $ 5\cdot 10^{-3}$
 \\
   \hline
 Scalar-top-loop &3 & $ 7\cdot 10^{-4}$ & $  \mathbf{5\cdot 10^{-4}}$ & $ \mathbf{5\cdot 10^{-4} \; *}$
 \\
     \hline
    \hline
 \multirow{3}{4em}{Polynomial} & 18 &$ 1.5\cdot 10^{-3}$&$1.5\cdot 10^{-3}\; *$&$\mathbf{ 8\cdot 10^{-5}\; *}$\\
 & 54 &$ 3\cdot 10^{-3}$&$9\cdot 10^{-4}\; *$&$\mathbf{ 8\cdot 10^{-5}\; *}$\\
 & 96 &$\dagger$&$8\cdot 10^{-4}\; *$&$\mathbf{ 1\cdot 10^{-4}\; *}$\\
   \hline
 \end{tabular}
 \end{center}
 \caption{Relative uncertainty on the integral estimate of the last iteration of the runs of Table~\ref{tab:results}, based on a sample of 5000 points. The integrator that adapted best to the integrand is boldfaced. A $*$ indicates when the value was still decreasing and had not yet converged, a $\dagger$ is in place where the algorithm did not converge to the true integrand.}
 \label{tab:results3}
 \end{table}

In this section we show the performance of \iflow\ and compare it to \veg\ and \foam\ based on the test functions we introduced in Sec.~\ref{sec:setup}.   
For the \veg\ algorithm, we use the default parameters as implemented in~\cite{VEGAS}. This includes the use of stratified sampling and a maximum of 1000 bins per axis. We further set the number of points per iteration to 5000. However, the implementation in~\cite{VEGAS} uses this number as a maximum, so we monitor the actual number of function calls separately. The setup of \foam\ requires a number of points per cell, which we fix to 5000. 
In the setup of \iflow, we use $2 \lceil \log_2 D\rceil$ number of coupling layers with the masking discussed in Sec.~\ref{sec:clnumber}, and the coupling transform $C$ taken to be a Piecewise Rational Quadratic spline (Appendix~\ref{sec:app.PRQ}). The neural network in each CL is taken to be a DNN of 5 layers with 32 nodes in each of the first four layers. The number of nodes in the last layer depends on the coupling transform $C$ and the dimensionality of the integrand. For the case of Piecewise Rational Quadratic splines, the number of nodes is given by $d\cdot(3n_{\text{bins}}+1)$, where $d$ is the number of dimensions to be transformed.  We further set the number of bins ($n_{\text{bins}}$) in each output dimension to 16. The learning rate was set to $1\cdot 10^{-3}$ in all cases. We use the exponential divergence, see eq.~\eqref{eq:exp.div}, as loss function. 

To compare the integrators, we set a relative uncertainty on the integral estimate as target. We then optimize the algorithms until the standard deviation of the inverse-variance weighted combination of the estimates of each optimization iteration (epoch) reaches this target. The inverse-variance weighted combination is defined as:
\begin{equation}
    \mu = \frac{\sum_i \mu_i/\sigma_i^2}{\sum_i 1/\sigma_i^2},\quad
    \sigma^2 = \frac{1}{\sum_i 1/\sigma_i^2},
\end{equation}
where $\mu_i (\sigma_i)$ is the mean (standard deviation) of the $i^{\text{th}}$ epoch and $\mu(\sigma)$ is the combination.
The relative uncertainty is defined as the uncertainty of the given estimate, normalized to the true value of the integral \footnote{Note that \foam\ directly gives the uncertainty including all sampled points up to the given iteration and no combination is needed.}. Given this setup, there are three metrics that we use to compare the integrators: 1) the number of function calls needed to reach the target uncertainty; 2) how close the estimated integral value is to the true value; 3) the uncertainty of the estimates in the last iterations. Each of those highlights a different aspect of the integrator and we detail them below. The results are shown in Tables~\ref{tab:results}--\ref{tab:results3}. We chose a relative target uncertainty of $10^{-4}$ for the non-polynomial test functions and $10^{-5}$ for the polynomials. For Gaussian and Camel functions, we use $\alpha=0.2$. In addition, we set a cut-off at $5\cdot 10^7$ function calls. 

{\it Number of function calls.} This number shows how often the integrand was evaluated by the algorithm until the target uncertainty was reached. Having fewer function calls is especially important when the function is numerically expensive to evaluate and the computational overhead of the integration algorithm becomes subleading. The results are shown in Tab.~\ref{tab:results}. We highlight the entry with the fewest calls in boldface. In addition, we mark entries in which the final integral estimate differs by more than 5 standard deviations from the true result with a $\dagger$ and entries in which too much memory was required by a $*$.

{\it Integral estimate and uncertainty.} This obviously shows how well the integrator estimated the value of the integral. We show our results in Tab.~\ref{tab:results2} and compare them to the true, known results. We highlight in boldface the entry with the smallest relative deviation (``pull"), defined as
\begin{equation}
\label{eq:rel.dev}
    \frac{I_\text{code}-I_\text{true}}{\sqrt{(\Delta I_\text{code}^2 + \Delta I_\text{true}^2)}}\;.
\end{equation} 
Here, $I_{\text{code}}$ is the result from \veg, \foam, or \iflow, $I_{\text{true}}$ is the true value of the integral, and the $\Delta I$ terms signify the uncertainty in the integral. Note, that $\Delta I_{\text{true}}$ is only non-zero for the case of the entangle circles for which it is $5\cdot10^{-9}$.
Cases in which the cut-off for function calls was reached (see Tab.~\ref{tab:results}) are marked with a $\dagger$, cases that ran into memory problems are marked with a $*$. 

{\it Relative uncertainty on the integral estimate in the last iterations.} The uncertainty on the integral estimate after adaptation is a measure for how well the algorithm adapted to the integrand. Once the algorithm is fully adapted, the uncertainty of a single integral estimate will be constant and the combination of all iterations will follow the $1/\sqrt{N}$ scaling law for MC estimates based on $N$ points. A better adapted algorithm introduces a smaller coefficient for that scaling and therefore require fewer function calls to reach a smaller uncertainty. We show our results in Tab.~\ref{tab:results3}. Cases in which \veg\ failed to converge to the right integral value are marked with a $\dagger$, a $*$ shows entries that still showed a downward trend at the end of the optimization, indicating that the algorithm was still adapting to the integrand. We highlight the integrator with the smallest uncertainty in boldface.

For the Gaussians, \veg\ always has the fewest calls. This is expected, since the integrand factorizes. However, the number grows rapidly for increasing integrand dimension, whereas the number for \iflow\ grows slower. \foam\ is not able to reach the target uncertainty for $D=8, 16$ before the cut-off of $5\cdot 10^7$ function calls. \iflow\ has adapted best to all Gaussians of $D>2$, as can be seen in Tab.~\ref{tab:results3}. This means that if a sufficiently small target uncertainty is required, \iflow\ would potentially need fewer function calls to reach it. The fact that the optimization of \iflow\ was not complete when the target uncertainty was reached can also be seen in Fig.~\ref{fig:4dgauss}, where the accumulated uncertainty of \iflow\ (red line) was falling quicker than $1/\sqrt{N}$ (dashed gray lines). In almost all of the cases, the integral estimate of \iflow\ was closest to the true integral value. 

For the Camel functions, \veg\ only has the fewest calls for $D=2$, in higher dimensions \iflow\ needs fewer calls. Note that in 16 dimensions, \veg\ completely misses one of the two peaks, yielding an estimate that is off by a factor of two. Since in this case the integrand is like a Gaussian, \veg\ converges quicker than the other algorithms. The integral estimate of \iflow\ seems also off~\cite{fn:camel16}, but this is due to the fact that it needs roughly 200 epochs to ``see" the structure of the integrand and all of those iterations contribute to the final number in Tab.~\ref{tab:results}. Again, \foam\ needs too many points for $D>4$ to reach the target uncertainty, the integral estimates of \iflow\ are closest to the true value, and \iflow\ has adapted best to the integrand. The latter can be seen by the small relative uncertainties in Tab.~\ref{tab:results3} and the scaling in the 4 dimensional case shown in Fig.~\ref{fig:4dcamel}. 

We discuss the entangled circles and the annulus after the polynomials. For the scalar top loop, \veg\ needs the fewest function calls, but all 3 integrators estimate the true value within one standard deviation. It is, however, interesting to see that both \veg\ and \foam\ seem to be fully adapted, whereas the uncertainties of the estimates of \iflow\ were improving much faster than the $1/\sqrt{N}$ expectation, see Fig.~\ref{fig:box}.  

The polynomials show the strength of \iflow. It has no problems adapting to the high-dimensional integrand, as can be seen in Tab.~\ref{tab:results3}. Therefore, \iflow\ needs comparatively few function calls to reach the target uncertainty. Since the polynomial does not factorize, \veg\ does not adapt well, or in the case of $D=96$ not at all. The difference between the adaptation of the algorithms is also visible in Fig.~\ref{fig:poly}. There, however, we see an interesting pattern in the accumulated uncertainty of \foam\ that we want to comment on. First, since \foam\ estimates the integral and uncertainty of a given iteration always on the points of all previous iterations, the uncertainty can grow for a growing number of points if the central value shifts. Second, due to the symmetry of the polynomial integrand, we see a periodic pattern that we can understand as follows. We start with an uncertainty based on the first 5000 points. Adding more points at this initial stage lets the algorithm ``see" more structure of the integrand and the uncertainty grows. A large cell with a large spread of functional values within it is then further split consecutively into many smaller cells. That reduces the spread of functional values per cell and therefore the uncertainty of the integral estimate. Once the uncertainty drops below a certain value, \foam\ stops splitting these (smaller) cells and returns to one of the ``bigger" cells it did not split in the beginning and starts splitting it. This initially increases the uncertainty again, because the spread of functional values in the large cell is larger than it was in the smaller cells. The result is the oscillating pattern we see in Fig.~\ref{fig:poly}. Note that the minima of this pattern follow the $1/\sqrt{N}$ scaling. 

The entangled circles are best integrated by \foam, as it is only 2 dimensional, yet non-factorizable. \iflow\ is slightly worse, but not by much. \veg, however, does not perform well. Similar statements can be made about the annulus function with hard cuts. \veg\ does the worst because of the non-factorizing structure of the integrand and \foam\ does well because it is only a 2-dimensional problem. As discussed in the earlier sections, \iflow\ also allows efficient sampling once it ``learns'' the integral up to small uncertainty, we therefore use these test functions to illustrate the sampling performance of \iflow. As an example, Fig.~\ref{fig:result.4} shows a sample distribution after training \iflow\ with $5\cdot 10^6$ points (1000 epochs with 5000 points per epoch) on the annulus function of Eq.~\eqref{eq:test.4}. For training, we used a learning schedule with exponential decay. An initial learning rate of $2 \cdot 10^{-3}$ is halved every 250 epochs. The cut efficiency, defined as the fraction of the generated points that pass the hard cut, is $89.6\%$. Figure~\ref{fig:result.3} shows the weights of 10000 points sampled after training with 10M points on the Entangled Circles of Eq.~\eqref{eq:test.3}. In the ideal case of $g\to f/I$, we expect the weight distribution to approach a delta function. In Fig.~\ref{fig:result.3}, we see that the trained results are much more like a delta function than the flat prior, showing significant improvement in the ability to draw samples from this function.

It is clear from these considerations that for the low-dimensional integrals ($D\leq 4$), all three integrators achieve reasonable results. If the target uncertainty is not very small, \veg\ or \foam\ provide the best integrator, depending on the integrand at hand. If, however, a very small target uncertainty is needed, \iflow\ is the better option as it adapts really well to the shape of the integrand. It is only the fact that \iflow\ adapts slower than \veg\ that makes \iflow\ lose in the beginning, as illustrated in Figure~\ref{fig:relerr}. For higher-dimensional integrands ($D\geq 4$) \iflow\ requires fewer function calls because it adapts better to the integrand. For example, \veg\ fails in the integration of 16-dimensional Camel function completely (missing one of the peaks) and \foam\ has a large uncertainty on the final result, even though it has much more function calls. \foam\ also performs poorly in the case of the Gaussian in 16 dimensions. In both of these cases, \foam\ approximately requires $b^D$ number of cells to map out all the features of a function, where $b$ is the average number of bins in each dimension. If $b$ is taken to be 2, for 16 dimensions, the number of cells required is at least $2^{16}$, which is far greater than our set cut-off of 10000 cells. Therefore, when dealing with high-dimensional integrals, \foam\ is the least efficient integrator. 

To quantify the computational overhead of \iflow\ in comparison to \veg, we trained both for 100 iterations with 5000 points per iteration on the polynomial function. It took \veg\ consistently 2 seconds for 2, 4, 8, 16, and 32 dimensions, and it took \iflow\ 14.7, 37.2, 80.1, 176.4, and 359.2 seconds respectively on a Laptop with Intel(R) Core(TM) i7-7700HQ CPU @ 2.80GHz. This increase is due to needing more Coupling Layers and therefore increasing the number of trainable parameters in higher dimensions. Working out the time for the 32 dimensions, we find that if the function evaluation takes much longer than 720 $\mu \text{s}$ then the overhead starts to become unimportant. Additionally, if the difference in function evaluations to reach a target precision are taken into account, the time for function evaluation is even smaller in order for the additional overhead of \iflow\ to become insignificant.

To summarize, \iflow\ provides the best integrator for integrals in 4 or more dimensions, especially if a high precision is needed and/or the integrand is numerically expensive and slow to evaluate. 

\begin{figure}
    \centering
    \begin{subfigure}[b]{0.4\textwidth}
        \includegraphics[scale=0.65]{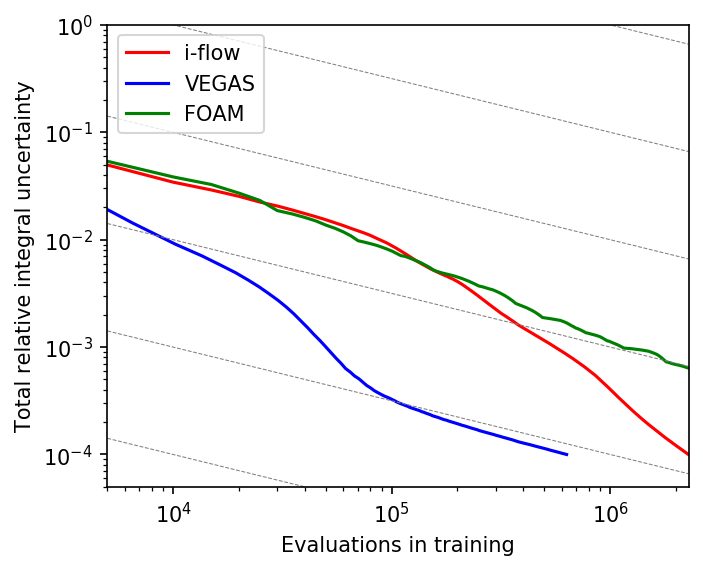}
        \caption{4-dimensional Gaussian}
        \label{fig:4dgauss}
    \end{subfigure}
    \hspace{1em}
    \vspace{1em}
    \begin{subfigure}[b]{0.4\textwidth}
        \includegraphics[scale=0.65]{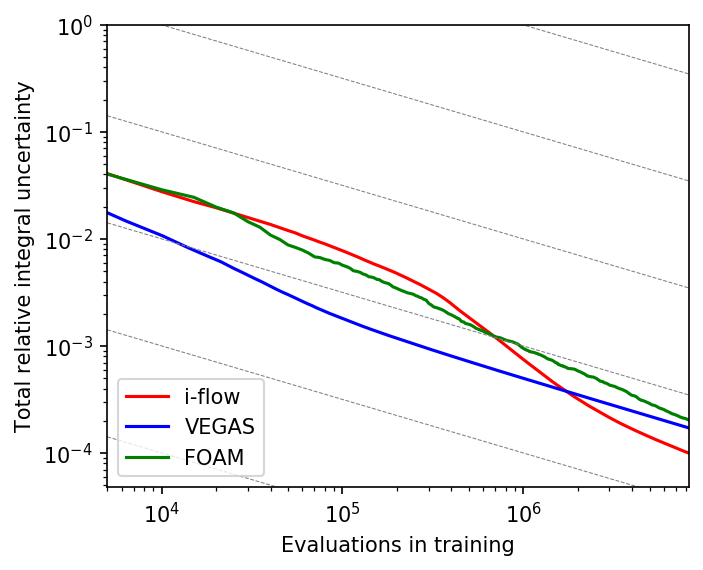}
        \caption{4-dimensional Camel}
        \label{fig:4dcamel}
    \end{subfigure}
    \hspace{1em}
    \vspace{1em}
    \begin{subfigure}[b]{0.4\textwidth}
        \includegraphics[scale=0.65]{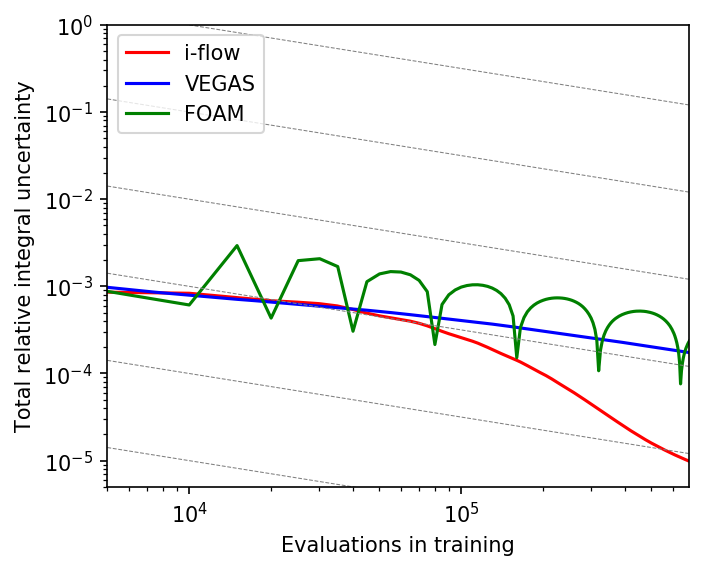}
        \caption{54-dimensional Polynomial}
        \label{fig:poly}
    \end{subfigure}
    \hspace{1em}
    \vspace{1em}
    \begin{subfigure}[b]{0.4\textwidth}
        \includegraphics[scale=0.65]{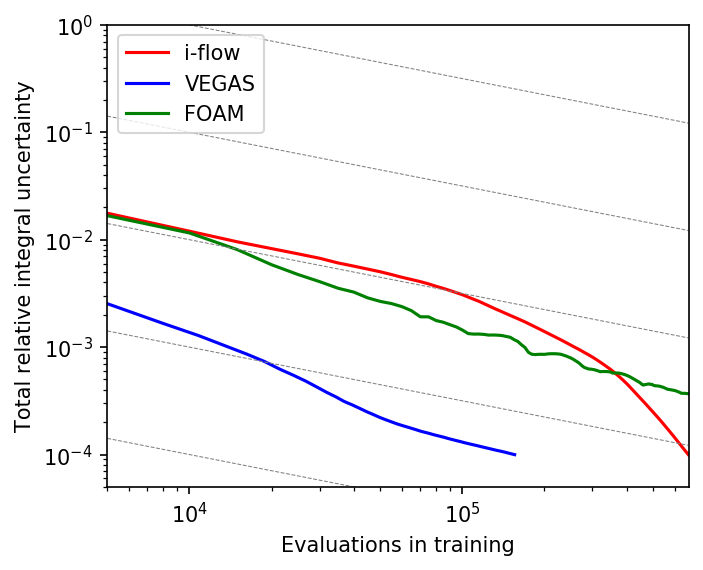}
        \caption{Scalar top loop}
        \label{fig:box}
    \end{subfigure}
    \caption{Accumulated (total) relative uncertainty of the integral, defined as the inverse-variance weighted combination of the uncertainties (normalized to the integral value) per iteration, as a function of number of points used in training for four different test functions. The dashed lines indicate a $1/\sqrt{N}$ scaling.}\label{fig:relerr}
\end{figure}

\begin{figure}[t]
  \centerline{\includegraphics[width=0.5\textwidth]{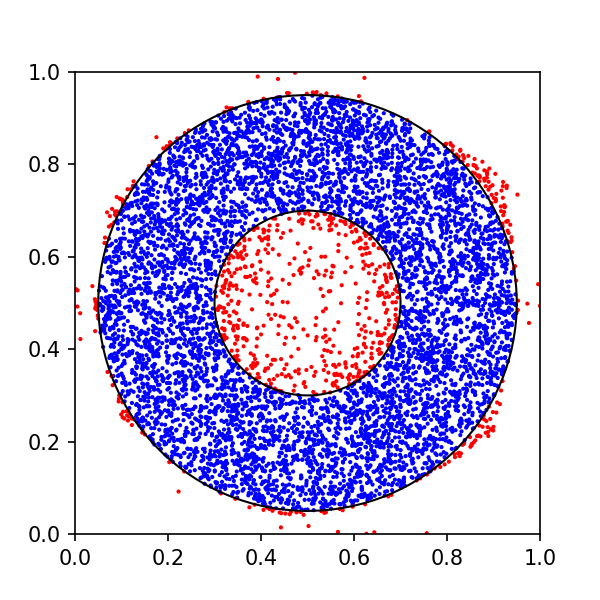}}
  \caption{A set of 7500 points sampled after training \iflow\ with 5M points on the Ring function. 6720 are inside (blue), 780 outside (red). }
  \label{fig:result.4}
\end{figure}

\begin{figure}[t]
  \centerline{\includegraphics[width=0.5\textwidth]{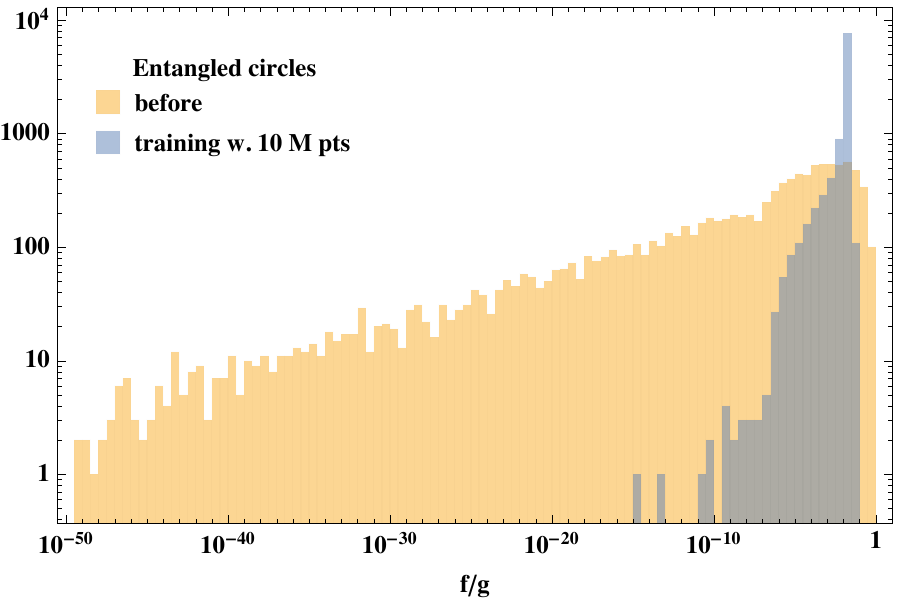}}
  \caption{Weights of 10000 points, sampled after training \iflow~on Entangled circles (\ref{eq:test.3}). $g$ is a flat distribution before training and approximately resembles the shape of $f$ after training.}
  \label{fig:result.3}
\end{figure}

\section{Conclusion and Outlook}
\label{sec:conclusions}
As shown in the previous section, \iflow\ tends to do better than both \veg\ and \foam\ for all the test cases provided. However, \iflow\ comes with a few downsides. Since \iflow\ has to learn all the correlations of the function, it takes significantly longer to achieve optimal performance compared to the other integrators. This can be seen in Fig.~\ref{fig:relerr}. This obviously translates to longer training times. Additionally, the memory footprint required for \iflow\ is much larger due to requiring storage for quicker parameter updates within the NNs. Both of these can be overcome with future improvements.

There are several directions in which we plan to improve the presented setup in the future. So far, we only used simple NN architectures in our coupling layers. Using convolutional NNs instead might improve the convergence of the normalizing flow for complicated integrands, as these networks have the ability to learn complicated shapes in images with fewer parameters than dense networks. 

The setup suggested in~\cite{2019arXiv190510347T} would allow the extension of \iflow\ to discrete distributions, which also has applications in HEP~\cite{Gao:2020zvv,Bothmann:2020ywa}. Another way to implement this type of information is utilizing Conditional Normalizing Flows~\cite{winkler2019learning}.

The implementation of transflow-learning, which was suggested in~\cite{gambardella2019transflow}, would allow the use of a trained normalizing flow on different, but similar problems without retraining the network. Such problems arise in HEP when new-physics effects from high energy scales modify scattering properties at low energies slightly and are described in an effective field theory framework. Another application for transflow-learning would be to train one network for a given dimensionality and adapt the network for another problem with the same dimensionality.

Using techniques like gradient checkpointing~\cite{2016arXiv160406174C} have the potential to reduce the memory usage substantially, therefore allowing more points to be used at each training step or larger NN architectures. 

The setup presented in~\cite{2018arXiv180703039K}, which introduces invertible $1\times 1$ convolutions, showed an improved performance over the vanilla implementation of the normalizing flows, which possibly also applies to our case. These $1\times1$ convolutions are generalizations of permutation operators acting on the inputs. Additionally, this would modify the maximum number of coupling layers required by having more expressive permutations.

\section*{Acknowledgements}
We thank Joao M. Goncalves Caldeira, Felix Kling, Luisa Lucie-Smith, Tilman Plehn, Holger Schulz, Nhan Tran, Paddy Fox, William Jay, David Shih, and the participants of the Aspen workshop ``The Energy Frontier Beyond the LHC Run 2” for their comments and discussions. We further thank Stefan H\"oche for helpful discussions, comments, and for his \foam\ implementation~\cite{Hoeche:foam}. \\
This manuscript has been authored by Fermi Research Alliance, LLC under Contract No. DE-AC02-07CH11359 with the U.S. Department of Energy, Office of Science, Office of High Energy Physics. This work was performed in part at Aspen Center for Physics, which is supported by National Science Foundation grant PHY-1607611. C.K. acknowledges the support of the Alexander von Humboldt Foundation.

\begin{appendices}
\section{Coupling Layer details}
\label{sec:app:CLdetails}
The implementation of the layers available in \iflow\ are detailed below. The layers are based on the work of~\cite{DBLP:journals/corr/abs-1808-03856, 2019arXiv190604032D} and are reproduced here for the convenience of the reader.

\subsection{Piecewise Linear}
For the piecewise linear coupling layer~\cite{DBLP:journals/corr/abs-1808-03856}, given $K$ bins of width $w$, the probability density function (PDF) is defined as:
  \begin{equation}
      q_i(t) = \begin{cases}
      Q_{i1}/w & t < w \\
      Q_{i2}/w & w \leq t < 2w \\
      \quad\vdots & \\
      Q_{iK}/w & 1-w \leq t < 1 \\
      \end{cases}\;.
  \end{equation}
  The cumulative distribution function (CDF) is defined by the integral giving:
  \begin{equation}
      C(x_i^B;Q) = \alpha Q_{ib} + \sum_{k=1}^{b-1} Q_{ik},
  \end{equation}
  where $b$ is the bin in which $x_i^B$ occurs ($(b-1)w\leq x_i^B < bw$), and $\alpha=\frac{x_i^B-(b-1)w}{w}$. Alternatively, we can define $b$ as the maximal $b$ for which  $\left(C_i-\sum_{k=1}^{b-1}Q_{ik}\right)>0$. The inverse CDF is given by:
  \begin{equation}
      x_i^B(C_i;Q) = \frac{w\left(C_i-\sum_{k=1}^{b-1}Q_{ik}\right)}{Q_{ib}}+(b-1)w.
  \end{equation}
  
  The Jacobian for this network is straightforward to calculate, and gives:
  \begin{equation}
      \left|\frac{\partial C}{\partial x_B}\right| = \prod_i Q_{ib}/w\;.
  \end{equation}

The piecewise linear layers require fixed bin widths in each layer. For details on why this is required, see Appendix B of~\cite{DBLP:journals/corr/abs-1808-03856}. 

\subsection{Piecewise Quadratic}

  For the piecewise quadratic coupling layer~\cite{DBLP:journals/corr/abs-1808-03856}, given $K$ bins with widths $W_{ik}$, with $K+1$ vertex heights given by $V_{ik}$, the PDF is defined as:
  \begin{equation}
      q_i(t) = \begin{cases}
      \frac{V_{i2}-V_{i1}}{W_{i1}}t+V_{i1} & t < W_{i1} \\
      \frac{V_{i3}-V_{i2}}{W_{i2}}\left(t-W_{i1}\right)+V_{i2} & W_{i1} \leq t < W_{i1}+W_{i2} \\
      \quad\vdots & \\
      \frac{V_{i(K+1)}-V_{iK}}{W_{iK}}\left(t-\sum_{k=1}^{K-1}W_{ik}\right)+V_{iK} & \sum_{k=1}^{K-1} W_{ik} \leq t < 1 \\
      \end{cases}
  \end{equation}
  Integrating the above equation leads to the CDF:
  \begin{equation}
      C(x_i^B;W,V) =\frac{\alpha^2}{2}\left(V_{ib+1}-V_{ib}\right) W_{ib} + V_{ib}W_{ib}\alpha + \sum_{k=1}^{b-1}\frac{V_{ik+1} + V_{ik}}{2} W_{ik},
  \end{equation}
  where $b$ is defined as the solution to~\footnote{Note that this definition means $b\in [1,K]$.} $\sum_{k=1}^{b-1} W_{ik} \leq x_i^B < \sum_{k=1}^{b} W_{ik}$, and $\alpha=\frac{x_i^B-\sum_{k=1}^{b-1}W_{ik}}{W_{ib}}$ is the relative position of $x_i^B$ in bin $b$. Inverting the CDF leads to:
  \begin{equation}
      x_i^B(C_i;W,V) = W_{ib}\left(\frac{-V_{ib}}{V_{ib+1}-V_{ib}}+\sqrt{\frac{V_{ib}^2}{\left(V_{ib+1}-V_{ib}\right)^2}+2\beta}\right) + \sum_{k=1}^{b-1} W_{ik},
  \end{equation}
  where $b$ is defined as the solution to 
  \begin{equation}
\sum_{k=1}^{b-1} \frac{V_{ik}+V_{ik+1}}{2} W_{ik} \leq C_i < \sum_{k=1}^{b} \frac{V_{ik}+V_{ik+1}}{2} W_{ik},
  \end{equation}
  and $\beta$ is the relative position of $C_i$ in the bin $b$, and is given by:
  \begin{equation}
    \beta = \frac{C_i-\sum_{k=1}^{b-1}\frac{V_{ik}+V_{ik+1}}{2}W_{ik}}{\left(V_{ib+1}-V_{ib}\right)W_{ib}}.
  \end{equation}

\subsection{Piecewise Rational Quadratic}\label{sec:app.PRQ}
  For the piecewise rational quadratic coupling layer~\cite{2019arXiv190604032D}, given $K+1$ knot points $\left\{\left(x^{(k)},y^{(k)}\right)\right\}_{k=0}^{K}$ that are monotonically increasing, with $(x^{(0)},y^{(0)})=(0,0)$ and $(x^{(K)},y^{(K)})=(1,1)$, and $K+1$ non-negative derivatives $\left\{d^{(k)}\right\}_{k=0}^{K}$, the CDF can be calculated using the algorithm from~\cite{10.1093/imanum/2.2.123}, which is roughly reproduced below.
  
  First, we define the bin widths ($w^{(k)} = x^{(k+1)}-x^{(k)}$) and the slopes ($s^{(k)} = \frac{y^{(k+1)}-y^{(k)}}{w^{(k)}}$). We next obtain the fractional distance ($\xi$) between the two knots that the point of interest ($x$) lies ($\xi = \frac{x - x^{(k)}}{w^{(k)}}$, where $k$ is the bin $x$ lies in). The CDF is given by:
  \begin{equation}
      g(x) = \frac{\alpha(\xi)}{\beta(\xi)},
  \end{equation}
  where the details of $\alpha(\xi)$ and $\beta(\xi)$ can be found in~\cite{10.1093/imanum/2.2.123}, but simplifies to:
  \begin{equation}
      g(x) = y^{(k)} + \frac{\left(y^{(k+1)} - y^{(k)}\right)\left[s^{(k)}\xi^2+d^{(k)}\xi\left(1-\xi\right)\right]}{s^{(k)}+\left[d^{(k+1)}+d^{(k)}-2s^{(k)}\right]\xi\left(1-\xi\right)},
  \end{equation}
  which is noted to be less prone to numerical issues~\cite{10.1093/imanum/2.2.123}. The inverse can be found by solving a quadratic equation~\cite{2019arXiv190604032D}:
  \begin{equation}
      q(x) = \alpha(\xi)-y\beta(\xi)=ax^2+bx+c=0,
  \end{equation}
  where the coefficients are given in~\cite{2019arXiv190604032D}, solving this equation for the solution that gives a monotonically increasing $x$ results in:
  \begin{equation}
      x=\frac{-b+\sqrt{b^2-4ac}}{2a} = \frac{2c}{-b-\sqrt{b^2-4ac}},
  \end{equation}
  where the second form is numerically more precise when $4ac$ is small, and is also valid for $a=0$~\cite{2019arXiv190604032D}.

\section{Loss functions}
\label{sec:app.loss}
We implemented several different divergences that can be used as loss functions. They differ in $p\leftrightarrow q$ symmetry, relative weight between small and large deviations, treatment of $p=0$ case (also in the derivative), and numerical complexity. All of them are from the class of $f$-divergences~\cite{2014ISPL...21...10N}. \\
Pearson $\chi^2$ divergence:
\begin{equation}
    D_{\chi^2} = \int \frac{(p(x)-q(x))^2}{q(x)}\; dx
\end{equation}
Kullback–Leibler divergence:
\begin{equation}
    D_{KL} = \int p(x) \log{\left(\frac{p(x)}{q(x)}\right)}\; dx
\end{equation}
squared Hellinger distance: 	
\begin{equation}
    D_{H^2} = \int 2 \left(\sqrt{p(x)}-\sqrt{q(x)}\right)^2\; dx
\end{equation}
Jeffreys divergence:
\begin{equation}
    D_{J} = \int (p(x)-q(x))(\log{p(x)}-\log{q(x)}) \; dx
\end{equation}
Chernoff's $\alpha$-divergence: 	
\begin{equation}
    D_{C\alpha} = \frac{4}{1-\alpha^2} \left(1-\int p(x)^{\frac{1-\alpha}{2}} q(x)^{\frac{1+\alpha}{2}}\; dx\right)
\end{equation}
exponential divergence: 	
\begin{equation}
\label{eq:exp.div}
    D_{e} = \int p(x) \log{\left(\frac{p(x)}{q(x)}\right)}^2\; dx
\end{equation}
$(\alpha,\beta)$-product divergence: 
\begin{equation}
    D_{\alpha\beta} = \frac{2}{(1-\alpha)(1-\beta)}\int\left(1-\left(\frac{q(x)}{p(x)}\right)^{\frac{1-\alpha}{2}}\right)\left(1-\left(\frac{q(x)}{p(x)}\right)^{\frac{1-\beta}{2}}\right) p(x)\; dx
\end{equation}
Jensen–Shannon divergence:
\begin{equation}
    D_{JS} = \frac{1}{2}\int p(x) \log{\left(\frac{2p(x)}{p(x)+q(x)}\right)}+q(x) \log{\left(\frac{2q(x)}{p(x)+q(x)}\right)}\; dx
\end{equation}

\section{Sector decomposition of scalar loop integrals}
\label{sec:app.box}
Following~\cite{Binoth:2002xh}, we give the integral representations of triangle and box functions in 4 dimensions using the Feynman parametrisation. To begin with, the triangle integral with external particles of energy $\sqrt{s_1},\sqrt{s_2},\sqrt{s_3}$ and internal propagators of masses $m_1,m_2,m_3$ is given by
\begin{equation}
\begin{split}
    I_3(s_1,s_2,s_3,m_1^2,m_2^2,m_3^2)=&\int \frac{d^4k}{i\pi ^2}\frac1{[(k-r_1)^2-m_1^2][(k-r_2)^2-m_2^2][k^2-m_3^2]}\\
    =&-\int_{0}^{\infty}d^3x\,\delta(1-x_{123})\frac{x_{123}^{-1}}{\mathcal{F}_{Tri}}\\
    \mathcal{F}_{Tri}=(-s_1)x_1x_3+(-s_2)&x_1x_2+(-s_3)x_2x_3+x_{123}(x_1m_1^2+x_2m_2^2+x_3m_3^2)-i\epsilon \\
    x_{123}=x_1+x_2+x_3\quad\quad\quad&
\end{split}
\end{equation}
The 3-dimensional integral is further split into 3 sectors by the decomposition:
\begin{equation}
    1=\Theta(x_1>x_2,x_3)+\Theta(x_2>x_1,x_3)+\Theta(x_3>x_1,x_2)
\end{equation}
For example, when $x_3>x_1,x_2$, after the variable transformation $t_i=x_i/x_3(i=1,2)$, the integral simplifies to
\begin{equation}
\begin{split}
S_{Tri}&(s_1,s_2,s_3,m_1^2,m_2^2,m_3^2)=\int_0^1 dt_1dt_2 \frac{(1+t_1+t_2)^{-1}}{\tilde{\mathcal{F}}_{Tri}(s_1,s_2,s_3,m_1^2,m_2^2,m_3^2,t_1,t_2)}\\
&\tilde{\mathcal{F}}_{Tri}(s_1,s_2,s_3,m_1^2,m_2^2,m_3^2,t_1,t_2)=(-s_1)t_1+(-s_2)t_1t_2+(-s_3)t_2\\
&\qquad\qquad\qquad\qquad\qquad\qquad\qquad+(1+t_1+t_2)(t_1m_1^2+t_2m_2^2+m_3^2)
\end{split}
\end{equation}
Therefore,
\begin{equation}
    \begin{split}
        I_3=S_{Tri}(s_1,s_2,s_3,m_1^2,m_2^2,m_3^2)+ S_{Tri}(1\, 2\, 3\,\to 2\, 3\, 1\,)+S_{Tri}(1\, 2\, 3\,\to 3\, 1\, 2\,)
    \end{split}
\end{equation}

One can perform the same trick to treat the box integral with 4 external fields and 4 propagators. After sector decomposition, one gets
\begin{equation}
\begin{split}
    I_4=&S_{Box}(s_{12},s_{23},s_1,s_2,s_3,s_4,m_1^2,m_2^2,m_3^2,m_4^2)+ \textbf{permutations}\\
    S_{Box}&(s_{12},s_{23},s_1,s_2,s_3,s_4,m_1^2,m_2^2,m_3^2,m_4^2)=\int_0^1 dt_1dt_2dt_3 \frac1{\tilde{\mathcal{F}}_{Box}^2}\\
    \tilde{\mathcal{F}}&_{Box}=(-s_{12})t_2+(-s_{23})t_1t_3+(-s_1)t_1+(-s_{2})t_1t_2+(-s_{3})t_2t_3\\
    &\quad\quad+(-s_{4})t_3+(1+t_1+t_2+t_3)(t_1m_1^2+t_2m_2^2+t_3m_3^2+m_4^2)
\end{split}
\end{equation}

\end{appendices}
\clearpage

\bibliography{ML_PT}

\begin{thebibliography}{10}

\bibitem{Hobolth2008}
Asger Hobolth, Marcy~K Uyenoyama, and Carsten Wiuf.
\newblock Importance sampling for the infinite sites model.
\newblock {\em Statistical Applications in Genetics and Molecular Biology},
  7(1), January 2008.
\newblock URL: \url{https://doi.org/10.2202/1544-6115.1400}, \href
  {http://dx.doi.org/10.2202/1544-6115.1400}
  {\path{doi:10.2202/1544-6115.1400}}.

\bibitem{MC.applications}
Charles~J. Mode et~al.
\newblock {\em Applications of Monte Carlo Methods in Biology, Medicine and
  Other Fields of Science}.
\newblock IntechOpen, 2011.
\newblock \href {http://dx.doi.org/10.5772/634} {\path{doi:10.5772/634}}.

\bibitem{1955JChPh..23..356R}
Marshall~N. {Rosenbluth} and Arianna~W. {Rosenbluth}.
\newblock {Monte Carlo Calculation of the Average Extension of Molecular
  Chains}.
\newblock {\em The Journal of Chemical Physics}, 23(2):356--359, Feb 1955.
\newblock \href {http://dx.doi.org/10.1063/1.1741967}
  {\path{doi:10.1063/1.1741967}}.

\bibitem{MacGillivray1982}
H.~T. MacGillivray and R.~J. Dodd.
\newblock Monte-carlo simulations of galaxy systems.
\newblock {\em Astrophysics and Space Science}, 86(2):419--435, Sep 1982.
\newblock URL: \url{https://doi.org/10.1007/BF00683346}, \href
  {http://dx.doi.org/10.1007/BF00683346} {\path{doi:10.1007/BF00683346}}.

\bibitem{2006PMB....51R.287R}
D.~W.~O. {Rogers}.
\newblock {REVIEW: Fifty years of Monte Carlo simulations for medical physics}.
\newblock {\em Physics in Medicine and Biology}, 51(13):R287--R301, Jul 2006.
\newblock \href {http://dx.doi.org/10.1088/0031-9155/51/13/R17}
  {\path{doi:10.1088/0031-9155/51/13/R17}}.

\bibitem{jackel2002monte}
Peter J{\"a}ckel.
\newblock {\em Monte Carlo methods in finance}, volume~71.
\newblock Wiley, 2002.

\bibitem{kalos2008monte}
László Szirmay-Kalos.
\newblock {\em Monte Carlo methods in global illumination : photo-realistic
  rendering with randomization}.
\newblock VDM Verl. Dr. Müller, Saarbrücken, 2008.

\bibitem{Webber:1986mc}
B.R. Webber.
\newblock {Monte Carlo Simulation of Hard Hadronic Processes}.
\newblock {\em Ann. Rev. Nucl. Part. Sci.}, 36:253--286, 1986.

\bibitem{Buckley:2011ms}
Andy Buckley et~al.
\newblock {General-purpose event generators for LHC physics}.
\newblock {\em Phys. Rept.}, 504:145--233, 2011.
\newblock \href {http://arxiv.org/abs/1101.2599} {\path{arXiv:1101.2599}},
  \href {http://dx.doi.org/http://dx.doi.org/10.1016/j.physrep.2011.03.005}
  {\path{doi:http://dx.doi.org/10.1016/j.physrep.2011.03.005}}.

\bibitem{Buckley:2019wov}
Andy Buckley.
\newblock {Computational challenges for MC event generation}.
\newblock In {\em {19th International Workshop on Advanced Computing and
  Analysis Techniques in Physics Research: Empowering the revolution: Bringing
  Machine Learning to High Performance Computing (ACAT 2019) Saas-Fee,
  Switzerland, March 11-15, 2019}}, 2019.
\newblock \href {http://arxiv.org/abs/1908.00167} {\path{arXiv:1908.00167}}.

\bibitem{ATLASCS}
{The ATLAS collaboration}.
\newblock {Computing and Software - Public Results}.
\newblock
  https://twiki.cern.ch/twiki/bin/view/AtlasPublic/ComputingandSoftwarePublicResults.

\bibitem{James:1968gu}
F.~James.
\newblock {Monte-Carlo phase space}.
\newblock 1968.
\newblock CERN-68-15.

\bibitem{Lepage:1977sw}
G.~Peter Lepage.
\newblock {A New Algorithm for Adaptive Multidimensional Integration}.
\newblock {\em J. Comput. Phys.}, 27:192, 1978.
\newblock \href {http://dx.doi.org/10.1016/0021-9991(78)90004-9}
  {\path{doi:10.1016/0021-9991(78)90004-9}}.

\bibitem{Lepage:1980dq}
G.~Peter Lepage.
\newblock {VEGAS: AN ADAPTIVE MULTIDIMENSIONAL INTEGRATION PROGRAM}.
\newblock 1980.

\bibitem{Jadach:2002kn}
S.~Jadach.
\newblock {Foam: A General purpose cellular Monte Carlo event generator}.
\newblock {\em Comput. Phys. Commun.}, 152:55--100, 2003.
\newblock \href {http://arxiv.org/abs/physics/0203033}
  {\path{arXiv:physics/0203033}}, \href
  {http://dx.doi.org/10.1016/S0010-4655(02)00755-5}
  {\path{doi:10.1016/S0010-4655(02)00755-5}}.

\bibitem{Bourilkov:2019yoi}
Dimitri Bourilkov.
\newblock {Machine and Deep Learning Applications in Particle Physics}.
\newblock 2019.
\newblock \href {http://arxiv.org/abs/1912.08245} {\path{arXiv:1912.08245}}.

\bibitem{Bendavid:2017zhk}
Joshua Bendavid.
\newblock {Efficient Monte Carlo Integration Using Boosted Decision Trees and
  Generative Deep Neural Networks}.
\newblock 2017.
\newblock \href {http://arxiv.org/abs/1707.00028} {\path{arXiv:1707.00028}}.

\bibitem{Klimek:2018mza}
Matthew~D. Klimek and Maxim Perelstein.
\newblock {Neural Network-Based Approach to Phase Space Integration}.
\newblock 2018.
\newblock \href {http://arxiv.org/abs/1810.11509} {\path{arXiv:1810.11509}}.

\bibitem{Note1}
An $N$ particle final state phase space is a $D\approx 4N-3$ dimensional
  integral, when including recursive multichannel selection in the integral.

\bibitem{Otten:2019hhl}
Sydney Otten, Sascha Caron, Wieske de~Swart, Melissa van Beekveld, Luc
  Hendriks, Caspar van Leeuwen, Damian Podareanu, Roberto Ruiz~de Austri, and
  Rob Verheyen.
\newblock {Event Generation and Statistical Sampling for Physics with Deep
  Generative Models and a Density Information Buffer}.
\newblock 2019.
\newblock \href {http://arxiv.org/abs/1901.00875} {\path{arXiv:1901.00875}}.

\bibitem{Hashemi:2019fkn}
Bobak Hashemi, Nick Amin, Kaustuv Datta, Dominick Olivito, and Maurizio
  Pierini.
\newblock {LHC analysis-specific datasets with Generative Adversarial
  Networks}.
\newblock 2019.
\newblock \href {http://arxiv.org/abs/1901.05282} {\path{arXiv:1901.05282}}.

\bibitem{DiSipio:2019imz}
Riccardo Di~Sipio, Michele Faucci~Giannelli, Sana Ketabchi~Haghighat, and
  Serena Palazzo.
\newblock {DijetGAN: A Generative-Adversarial Network Approach for the
  Simulation of QCD Dijet Events at the LHC}.
\newblock {\em JHEP}, 08:110, 2020.
\newblock \href {http://arxiv.org/abs/1903.02433} {\path{arXiv:1903.02433}},
  \href {http://dx.doi.org/10.1007/JHEP08(2019)110}
  {\path{doi:10.1007/JHEP08(2019)110}}.

\bibitem{Butter:2019cae}
Anja Butter, Tilman Plehn, and Ramon Winterhalder.
\newblock {How to GAN LHC Events}.
\newblock {\em SciPost Phys.}, 7:075, 2019.
\newblock \href {http://arxiv.org/abs/1907.03764} {\path{arXiv:1907.03764}},
  \href {http://dx.doi.org/10.21468/SciPostPhys.7.6.075}
  {\path{doi:10.21468/SciPostPhys.7.6.075}}.

\bibitem{Carrazza:2019cnt}
Stefano Carrazza and Frédéric~A. Dreyer.
\newblock {Lund jet images from generative and cycle-consistent adversarial
  networks}.
\newblock 2019.
\newblock \href {http://arxiv.org/abs/1909.01359} {\path{arXiv:1909.01359}}.

\bibitem{SHiP:2019gcl}
C.~Ahdida et~al.
\newblock {Fast simulation of muons produced at the SHiP experiment using
  Generative Adversarial Networks}.
\newblock 2019.
\newblock \href {http://arxiv.org/abs/1909.04451} {\path{arXiv:1909.04451}}.

\bibitem{Butter:2019eyo}
Anja Butter, Tilman Plehn, and Ramon Winterhalder.
\newblock {How to GAN Event Subtraction}.
\newblock 2019.
\newblock \href {http://arxiv.org/abs/1912.08824} {\path{arXiv:1912.08824}}.

\bibitem{Matchev:2020tbw}
Konstantin~T. Matchev and Prasanth Shyamsundar.
\newblock {Uncertainties associated with GAN-generated datasets in high energy
  physics}.
\newblock 2020.
\newblock \href {http://arxiv.org/abs/2002.06307} {\path{arXiv:2002.06307}}.

\bibitem{Bishara:2019iwh}
Fady Bishara and Marc Montull.
\newblock {(Machine) Learning amplitudes for faster event generation}.
\newblock 2019.
\newblock \href {http://arxiv.org/abs/1912.11055} {\path{arXiv:1912.11055}}.

\bibitem{DBLP:journals/corr/DinhKB14}
Laurent Dinh, David Krueger, and Yoshua Bengio.
\newblock {NICE:} non-linear independent components estimation.
\newblock In {\em 3rd International Conference on Learning Representations,
  {ICLR} 2015, San Diego, CA, USA, May 7-9, 2015, Workshop Track Proceedings},
  2015.
\newblock URL: \url{http://arxiv.org/abs/1410.8516}.

\bibitem{Dinh2016DensityEU}
Laurent Dinh, Jascha Sohl-Dickstein, and Samy Bengio.
\newblock Density estimation using real nvp.
\newblock {\em ArXiv}, abs/1605.08803, 2016.

\bibitem{rezende2015variational}
Danilo~Jimenez Rezende and Shakir Mohamed.
\newblock Variational inference with normalizing flows, 2015.
\newblock \href {http://arxiv.org/abs/1505.05770} {\path{arXiv:1505.05770}}.

\bibitem{DBLP:journals/corr/abs-1808-03856}
Thomas M\"{u}ller, Brian Mcwilliams, Fabrice Rousselle, Markus Gross, and Jan
  Nov\'{a}k.
\newblock Neural importance sampling.
\newblock {\em ACM Trans. Graph.}, 38(5), October 2019.
\newblock URL: \url{https://doi.org/10.1145/3341156}, \href
  {http://dx.doi.org/10.1145/3341156} {\path{doi:10.1145/3341156}}.

\bibitem{2019arXiv190604032D}
Conor {Durkan}, Artur {Bekasov}, Iain {Murray}, and George {Papamakarios}.
\newblock {Neural Spline Flows}.
\newblock {\em arXiv e-prints}, page arXiv:1906.04032, Jun 2019.
\newblock \href {http://arxiv.org/abs/1906.04032} {\path{arXiv:1906.04032}}.

\bibitem{2017arXiv170607561S}
Jiaming {Song}, Shengjia {Zhao}, and Stefano {Ermon}.
\newblock {A-NICE-MC: Adversarial Training for MCMC}.
\newblock {\em arXiv e-prints}, Jun 2017.
\newblock \href {http://arxiv.org/abs/1706.07561} {\path{arXiv:1706.07561}}.

\bibitem{2017arXiv171109268L}
Daniel {Levy}, Matthew~D. {Hoffman}, and Jascha {Sohl-Dickstein}.
\newblock {Generalizing Hamiltonian Monte Carlo with Neural Networks}.
\newblock {\em arXiv e-prints}, Nov 2017.
\newblock \href {http://arxiv.org/abs/1711.09268} {\path{arXiv:1711.09268}}.

\bibitem{2019arXiv190303704H}
Matthew {Hoffman}, Pavel {Sountsov}, Joshua~V. {Dillon}, Ian {Langmore}, Dustin
  {Tran}, and Srinivas {Vasudevan}.
\newblock {NeuTra-lizing Bad Geometry in Hamiltonian Monte Carlo Using Neural
  Transport}.
\newblock {\em arXiv e-prints}, Mar 2019.
\newblock \href {http://arxiv.org/abs/1903.03704} {\path{arXiv:1903.03704}}.

\bibitem{tensorflow2015-whitepaper}
Mart\'{\i}n Abadi, Ashish Agarwal, Paul Barham, Eugene Brevdo, Zhifeng Chen,
  Craig Citro, Greg~S. Corrado, Andy Davis, Jeffrey Dean, Matthieu Devin,
  Sanjay Ghemawat, Ian Goodfellow, Andrew Harp, Geoffrey Irving, Michael Isard,
  Yangqing Jia, Rafal Jozefowicz, Lukasz Kaiser, Manjunath Kudlur, Josh
  Levenberg, Dandelion Man\'{e}, Rajat Monga, Sherry Moore, Derek Murray, Chris
  Olah, Mike Schuster, Jonathon Shlens, Benoit Steiner, Ilya Sutskever, Kunal
  Talwar, Paul Tucker, Vincent Vanhoucke, Vijay Vasudevan, Fernanda Vi\'{e}gas,
  Oriol Vinyals, Pete Warden, Martin Wattenberg, Martin Wicke, Yuan Yu, and
  Xiaoqiang Zheng.
\newblock {TensorFlow}: Large-scale machine learning on heterogeneous systems,
  2015.
\newblock Software available from tensorflow.org.
\newblock URL: \url{https://www.tensorflow.org/}.

\bibitem{2014ISPL...21...10N}
Frank {Nielsen} and Richard {Nock}.
\newblock {On the Chi Square and Higher-Order Chi Distances for Approximating f
  -Divergences}.
\newblock {\em IEEE Signal Processing Letters}, 21(1):10--13, Jan 2014.
\newblock \href {http://arxiv.org/abs/1309.3029} {\path{arXiv:1309.3029}},
  \href {http://dx.doi.org/10.1109/LSP.2013.2288355}
  {\path{doi:10.1109/LSP.2013.2288355}}.

\bibitem{takahasi1974double}
Hidetosi Takahasi and Masatake Mori.
\newblock Double exponential formulas for numerical integration.
\newblock {\em Publications of the Research Institute for Mathematical
  Sciences}, 9(3):721--741, 1974.

\bibitem{Hoeche:2019rti}
Stefan Höche, Stefan Prestel, and Holger Schulz.
\newblock {Simulation of vector boson plus many jet final states at the high
  luminosity LHC}.
\newblock 2019.
\newblock \href {http://arxiv.org/abs/1905.05120} {\path{arXiv:1905.05120}}.

\bibitem{VEGAS}
G.~Peter Lepage.
\newblock Vegas documentation release 3.4.
\newblock URL: \url{https://vegas.readthedocs.io/en/latest/index.html}.

\bibitem{Hoeche:foam}
S.~Hoeche.
\newblock {private implementation of foam integrator}.
\newblock URL: \url{{https://gitlab.com/shoeche/foam}}.

\bibitem{Dulat:2017prg}
Falko Dulat, Bernhard Mistlberger, and Andrea Pelloni.
\newblock {Differential Higgs production at N$^{3}$LO beyond threshold}.
\newblock {\em JHEP}, 01:145, 2018.
\newblock \href {http://arxiv.org/abs/1710.03016} {\path{arXiv:1710.03016}},
  \href {http://dx.doi.org/10.1007/JHEP01(2018)145}
  {\path{doi:10.1007/JHEP01(2018)145}}.

\bibitem{2017arXiv170507057P}
George {Papamakarios}, Theo {Pavlakou}, and Iain {Murray}.
\newblock {Masked Autoregressive Flow for Density Estimation}.
\newblock {\em arXiv e-prints}, page arXiv:1705.07057, May 2017.
\newblock \href {http://arxiv.org/abs/1705.07057} {\path{arXiv:1705.07057}}.

\bibitem{papamakarios2019normalizing}
George Papamakarios, Eric Nalisnick, Danilo~Jimenez Rezende, Shakir Mohamed,
  and Balaji Lakshminarayanan.
\newblock Normalizing flows for probabilistic modeling and inference, 2019.
\newblock \href {http://arxiv.org/abs/1912.02762} {\path{arXiv:1912.02762}}.

\bibitem{PINKUS20001}
Allan Pinkus.
\newblock Weierstrass and approximation theory.
\newblock {\em Journal of Approximation Theory}, 107(1):1 -- 66, 2000.
\newblock URL:
  \url{http://www.sciencedirect.com/science/article/pii/S0021904500935081},
  \href {http://dx.doi.org/https://doi.org/10.1006/jath.2000.3508}
  {\path{doi:https://doi.org/10.1006/jath.2000.3508}}.

\bibitem{10.2307/1989788}
M.~H. Stone.
\newblock Applications of the theory of boolean rings to general topology.
\newblock {\em Transactions of the American Mathematical Society},
  41(3):375--481, 1937.
\newblock URL: \url{http://www.jstor.org/stable/1989788}.

\bibitem{10.2307/3029337}
Marshall~H. Stone.
\newblock The generalized weierstrass approximation theorem.
\newblock {\em Mathematics Magazine}, 21(5):237--254, 1948.
\newblock URL: \url{http://www.jstor.org/stable/3029337}.

\bibitem{kingma2014adam}
Diederik~P. Kingma and Jimmy Ba.
\newblock Adam: A method for stochastic optimization, 2014.
\newblock \href {http://arxiv.org/abs/1412.6980} {\path{arXiv:1412.6980}}.

\bibitem{Note2}
Since we initialize the last layer of each network with vanishing bias and
  weights, in the first sampling $g(x)$ is constant.

\bibitem{Mathematica}
Wolfram~Research{,} Inc.
\newblock Mathematica, {V}ersion 12.0.
\newblock Champaign, IL, 2019.
\newblock URL: \url{https://www.wolfram.com/mathematica}.

\bibitem{Note3}
There is no known analytic solution to this given function.

\bibitem{Binoth:2002xh}
T.~Binoth, G.~Heinrich, and N.~Kauer.
\newblock {A Numerical evaluation of the scalar hexagon integral in the
  physical region}.
\newblock {\em Nucl. Phys.}, B654:277--300, 2003.
\newblock \href {http://arxiv.org/abs/hep-ph/0210023}
  {\path{arXiv:hep-ph/0210023}}, \href
  {http://dx.doi.org/10.1016/S0550-3213(03)00052-X}
  {\path{doi:10.1016/S0550-3213(03)00052-X}}.

\bibitem{HAHN1999153}
T.~Hahn and M.~Pérez-Victoria.
\newblock Automated one-loop calculations in four and d dimensions.
\newblock {\em Computer Physics Communications}, 118(2):153 -- 165, 1999.
\newblock URL:
  \url{http://www.sciencedirect.com/science/article/pii/S0010465598001738},
  \href {http://dx.doi.org/https://doi.org/10.1016/S0010-4655(98)00173-8}
  {\path{doi:https://doi.org/10.1016/S0010-4655(98)00173-8}}.

\bibitem{Gao:2020zvv}
Christina Gao, Stefan H{\"o}che, Joshua Isaacson, Claudius Krause, and Holger
  Schulz.
\newblock {Event Generation with Normalizing Flows}.
\newblock 2020.
\newblock \href {http://arxiv.org/abs/2001.10028} {\path{arXiv:2001.10028}}.

\bibitem{Bothmann:2020ywa}
Enrico Bothmann, Timo Jan{\ss}en, Max Knobbe, Tobias Schmale, and Steffen
  Schumann.
\newblock {Exploring phase space with Neural Importance Sampling}.
\newblock 2020.
\newblock \href {http://arxiv.org/abs/2001.05478} {\path{arXiv:2001.05478}}.

\bibitem{fn:camel16}
{The estimate of \iflow\ only deviates from the true value because the
  estimates from all iterations are combined and the first $\sim 200$ epochs
  only ``see" one of the two peaks. Combining 15 of the last epochs yields
  $0.86377(136)$, which is closer to the true value.}

\bibitem{Note4}
Note that {\protect \tt Foam}\ directly gives the uncertainty including all
  sampled points up to the given iteration and no combination is needed.

\bibitem{2019arXiv190510347T}
Dustin {Tran}, Keyon {Vafa}, Kumar {Krishna Agrawal}, Laurent {Dinh}, and Ben
  {Poole}.
\newblock {Discrete Flows: Invertible Generative Models of Discrete Data}.
\newblock {\em arXiv e-prints}, page arXiv:1905.10347, May 2019.
\newblock \href {http://arxiv.org/abs/1905.10347} {\path{arXiv:1905.10347}}.

\bibitem{winkler2019learning}
Christina Winkler, Daniel Worrall, Emiel Hoogeboom, and Max Welling.
\newblock Learning likelihoods with conditional normalizing flows.
\newblock 2019.
\newblock \href {http://arxiv.org/abs/1912.00042} {\path{arXiv:1912.00042}}.

\bibitem{gambardella2019transflow}
Andrew Gambardella, Atılım~Güneş Baydin, and Philip H.~S. Torr.
\newblock Transflow learning: Repurposing flow models without retraining.
\newblock 2019.
\newblock \href {http://arxiv.org/abs/1911.13270} {\path{arXiv:1911.13270}}.

\bibitem{2016arXiv160406174C}
Tianqi {Chen}, Bing {Xu}, Chiyuan {Zhang}, and Carlos {Guestrin}.
\newblock {Training Deep Nets with Sublinear Memory Cost}.
\newblock {\em arXiv e-prints}, page arXiv:1604.06174, Apr 2016.
\newblock URL: \url{https://github.com/cybertronai/gradient-checkpointing},
  \href {http://arxiv.org/abs/1604.06174} {\path{arXiv:1604.06174}}.

\bibitem{2018arXiv180703039K}
Diederik~P. {Kingma} and Prafulla {Dhariwal}.
\newblock {Glow: Generative Flow with Invertible 1x1 Convolutions}.
\newblock {\em arXiv e-prints}, page arXiv:1807.03039, Jul 2018.
\newblock \href {http://arxiv.org/abs/1807.03039} {\path{arXiv:1807.03039}}.

\bibitem{Note5}
Note that this definition means $b\in [1,K]$.

\bibitem{10.1093/imanum/2.2.123}
J.~A. GREGORY and R.~DELBOURGO.
\newblock {Piecewise Rational Quadratic Interpolation to Monotonic Data}.
\newblock {\em IMA Journal of Numerical Analysis}, 2(2):123--130, 04 1982.
\newblock URL: \url{https://doi.org/10.1093/imanum/2.2.123}, \href
  {http://arxiv.org/abs/http://oup.prod.sis.lan/imajna/article-pdf/2/2/123/2267745/2-2-123.pdf}
  {\path{arXiv:http://oup.prod.sis.lan/imajna/article-pdf/2/2/123/2267745/2-2-123.pdf}},
  \href {http://dx.doi.org/10.1093/imanum/2.2.123}
  {\path{doi:10.1093/imanum/2.2.123}}.

\end{thebibliography}
\bibliographystyle{unsrturl}

\end{document}